\newcommand{\leo}{\textcolor{black}}
\newcommand{\leoo}{\textcolor{black}}
\newcommand{\leooo}{\textcolor{black}}
\newcommand{\cel}{\textcolor{black}}
\newcommand{\celB}{\textcolor{black}}
\newcommand{\low}{L_{ow}}
\newcommand{\cL}{\mathcal{L}}
\newcommand{\cR}{\mathcal{R}}
\newcommand{\cM}{\mathcal{M}}
\newcommand{\cN}{\mathcal{N}}
\newcommand{\cT}{\mathbb{T}}
\newcommand{\dB}{\mathbb{B}}
\newcommand{\cK}{\mathcal{K}}
\newcommand{\cW}{\mathcal{W}}
\newcommand{\cO}{\mathcal{O}}
\newcommand{\cC}{\mathcal{C}}
\newcommand{\cD}{\mathcal{D}}
\newtheorem{definition}{Definition}
\newtheorem{theorem}{Theorem}
\newtheorem{lemma}{Lemma}
\newtheorem{observation}{Observation}
\newtheorem{corollary}{Corollary}
\title{Reconstructing phylogenetic level-1 networks from nondense binet and trinet sets}
\author{Katharina Huber}
  \address[Katharina Huber]{School of Computing Sciences,  University of East Anglia, Norwich, United Kingdom}
\author{Leo van Iersel}
  \address[Leo van Iersel]{Delft Institute of Applied Mathematics, Delft University of Technology, The Netherlands}
\author{Vincent Moulton}
  \address[Vincent Moulton]{School of Computing Sciences,  University of East Anglia, Norwich, United Kingdom}
\author{Celine Scornavacca}
  \address[Celine Scornavacca]{ISEM, CNRS -- Universit\'e Montpellier II,  Montpellier, France\\  Institut de Biologie Computationnelle, Montpellier, France}
\author{Taoyang Wu}
  \address[Taoyang Wu]{School of Computing Sciences,  University of East Anglia, Norwich, United Kingdom}
\email[Leo van Iersel]{l.j.j.v.iersel@gmail.com}
\begin{document}

\begin{abstract} \emph{Binets} and \emph{trinets} are phylogenetic networks with two and three leaves, respectively. Here we consider the problem of deciding if there exists a binary level-1 phylogenetic network displaying a given set~$\cT$ of binary binets or trinets over a taxa set~$X$, and constructing such a network whenever it exists. We show that this is NP-hard for trinets but polynomial-time solvable for binets. Moreover, we show that the problem is still polynomial-time solvable for inputs consisting of binets and trinets as long as the cycles in the trinets have size three. Finally, we present an~$O(3^{|X|} poly(|X|))$ time algorithm for general sets of binets and trinets. The latter two algorithms generalise to instances containing level-1 networks with arbitrarily many leaves, and thus provide some of the first supernetwork algorithms for computing networks from a set of rooted phylogenetic networks.
\end{abstract}

\maketitle

\pagestyle{myheadings}
\thispagestyle{plain}
\markboth{HUBER, VAN IERSEL, MOULTON, SCORNAVACCA AND WU}{RECONSTRUCTING LEVEL-1 NETWORKS FROM NONDENSE BINETS OR TRINETS}

\section{Introduction}

A key problem in biology is to reconstruct
the evolutionary history of a set of taxa using data
such as DNA sequences or morphological features. These histories
are commonly represented by phylogenetic trees, and 
can be used, for example, to inform genomics studies, analyse
virus epidemics and understand the origins of humans \cite{SempleSteel2003}. 
Even so, in case evolutionary processes such as recombination and hybridization 
are involved, it can be more appropriate to
represent histories using phylogenetic networks instead 
of trees~\cite{expanding}. 

Generally speaking, a phylogenetic network is any type of graph
with a subset of its vertices labelled by the set of taxa 
that in some way represents evolutionary relationships 
between the taxa \cite{HusonRupp2008}. Here, 
however, we focus on a special type of 
phylogenetic network called a {\em level-1} network.
We present the formal definition for this type of network in 
the next section but, essentially, it is
a binary, directed acyclic graph with a single root,
whose leaves correspond to the taxa, and in 
which any two cycles are disjoint (for example, see 
Figure~\ref{fig:network} below).
This type of network was first considered in \cite{wang}
and is closely related to so-called
galled-trees \cite{Cardona2009a,GusfieldEtAl2004}.Level-1 networks have been used to, for example, analyse virus evolution \cite{lev1athan}, and are of practical importance since their simple structure allows for 
efficient construction \cite{GusfieldEtAl2004,lev1athan,huynh}
and comparison \cite{JL2014}.

One of the main approaches that has been
used to construct level-1 networks
is from {\em triplet} sets, that is, sets of rooted binary trees 
with three leaves (see e.g. \cite{lev1athan,JanssonSung2006,JanssonEtAl2006}). 
Even so, it has been 
observed that the set of triplets displayed by a level-1 network
does not necessarily provide all of the information 
required to uniquely define or {\em encode} the 
network \cite{GambetteHuber2012}. Motivated by this
observation, in \cite{huber2011encoding} an algorithm 
was developed for constructing level-1 networks from a network
analogue of triplets: rooted binary networks
with three leaves, or {\em trinets}. This algorithm relies on 
the fact that the trinets displayed by a level-1 network do indeed 
encode the network \cite{huber2011encoding}. 
Even so, the algorithm 
was developed for {\em dense} trinet sets only, i.e. sets in which  
there is a trinet associated to each combination of three taxa.

In this paper, we consider the problem of
constructing level-1 networks from arbitrary sets of level-1 trinets
and binets, where a binet is an even simpler building
block than a trinet, consisting of a rooted binary 
network with just two leaves. We consider binets
as well \leoo{as trinets} since they can provide important information to help  
piece together sets of trinets. Our approach can be regarded as a generalisation of the well-known supertree algorithm called \textsc{Build} \cite{AhoEtAl1981,SempleSteel2003} for checking whether or not a set of triplets is displayed by a phylogenetic tree and
constructing such a tree if this is the case.
In particular, the algorithm we present in Section 4 is one of the first supernetwork algorithms for constructing
a phylogenetic network from a set of networks. Note that
some algorithms have already been developed for computing {\em unrooted} supernetworks -- see, for example~\cite{HollandConner,supernetworks}.

\leoo{We now summarize the contents of the paper. After introducing some basic notation in the next section, in Section~\ref{sec:binets} we begin by presenting a polynomial-time algorithm for deciding whether or not there exists some level-1 network that displays a given set of level-1 binets, and for constructing such a network if it actually exists (see Theorem~\ref{thm:binets}). Then, in Section~\ref{sec:nondense}, we present an exponential-time algorithm for an arbitrary set consisting of binets and trinets (see Theorem~\ref{thm:exptime}). This algorithm uses a top-down approach that is somewhat similar in nature to the \textsc{Build} algorithm \cite{AhoEtAl1981,SempleSteel2003} but it is considerably more intricate.} The algorithm can be generalised to instances containing level-1 networks with arbitrarily many leaves since trinets encode level-1 networks~\cite{huber2011encoding}.

In Section~\ref{sec:tinycycle}
we show that for the special instance 
where each cycle in the input trinets has size three
our exponential-time algorithm is actually
guaranteed to work in polynomial time. This is still the case when the input consists of binary level-1 networks with arbitrarily many leaves as long as all their cycles have length three. However, in Section~\ref{sec:hardness} 
we prove that in general it is NP-hard to decide 
whether or not there 
exists a binary level-1 network that displays
an arbitrary set of trinets (see Theorem~\ref{nphard}). We also show that this problem remains NP-hard if we insist that the network contains only one cycle.
Our proof is similar to the proof 
that it is NP-hard to decide the same question
for an arbitrary set of triplets given in \cite{JanssonSung2006}, but 
the reduction is more complicated.
In Section~\ref{sec:concl}, we conclude with a discussion of some directions for future work.

\section{Preliminaries\label{sec:prelim}}

Let~$X$ be some finite set of labels. We will refer to the elements of~$X$ as \emph{taxa}. A \emph{rooted phylogenetic network} $N$ on~$X$ is a directed acyclic graph which has a single indegree-0 vertex (the \emph{root}, denoted by $\rho(N)$), no indegree-1 outdegree-1 vertices and its outdegree-0 vertices (the \emph{leaves}) bijectively labelled by the elements of~$X$. We will refer to rooted phylogenetic networks as \emph{networks} for short. In addition, we will identify each leaf with its label and denote the set of leaves of a network~$N$ by~$\cL(N)$. For a set~$\cN$ of networks, $\cL(\cN)$ is defined to be $\cup_{N\in\cN} \cL(N)$. A network is called \emph{binary} if all vertices have indegree and outdegree at most two and all vertices with indegree two (the \emph{reticulations}) have outdegree one. A \emph{cycle} of a network is the union of two non-identical, internally-vertex-disjoint, directed $s$-$t$ paths, for any two distinct vertices~$s,t$. The \emph{size} of the cycle is the number of vertices that are on at least one of these paths. A cycle is \emph{tiny} if it has size three and \emph{largish} otherwise. A network is said to be a \emph{tiny cycle network} if all its cycles are tiny. A binary network is said to be a binary \emph{level-1} network if any two cycles are disjoint. We only consider binary level-1 networks in this paper, see Figure~\ref{fig:network} for an example containing one tiny and two largish cycles.

If~$N$ is a network on~$X$ and~$X'\subseteq X$ nonempty, then a vertex~$v$ of~$N$ is a \emph{stable ancestor} of~$X'$ (in~$N$) if every directed path from the root of~$N$ to a leaf in~$X'$ contains~$v$. The \emph{lowest stable ancestor} of~$X'$ is the unique vertex~$LSA(X')$ that is a stable ancestor of~$X'$ and such that there is no directed path from $LSA(X')$ to any of the other stable ancestors of~$X'$.
%for which there exists a directed path from~$LSA(X')$ to~$u$.

A \emph{binet} is a network with exactly two leaves and a \emph{trinet} is a network with exactly three leaves. In this paper, we only consider binary level-1 binets and trinets. There exist two binary level-1 binets and eight binary level-1 trinets \leooo{(up to relabelling)}, all presented in Figure~\ref{fig:nets}. In the following, we will use the names of the trinets and binets indicated in that figure. For example, $T_1(x,y;z)$ denotes the only rooted tree on $\{x,y,z\}$ with $\{x,y\}$ as cluster.

A set~$\dB$ of binets on a set~$X$ of taxa is called \emph{dense} if for each pair of taxa from~$X$ there is at least one binet in~$\dB$ on those taxa. A set~$\cT$ of (binets and) trinets on~$X$ is \emph{dense} if for each combination of three taxa from~$X$ there is at least one trinet in~$\cT$ on those taxa.

\begin{figure}
  \centerline{\includegraphics{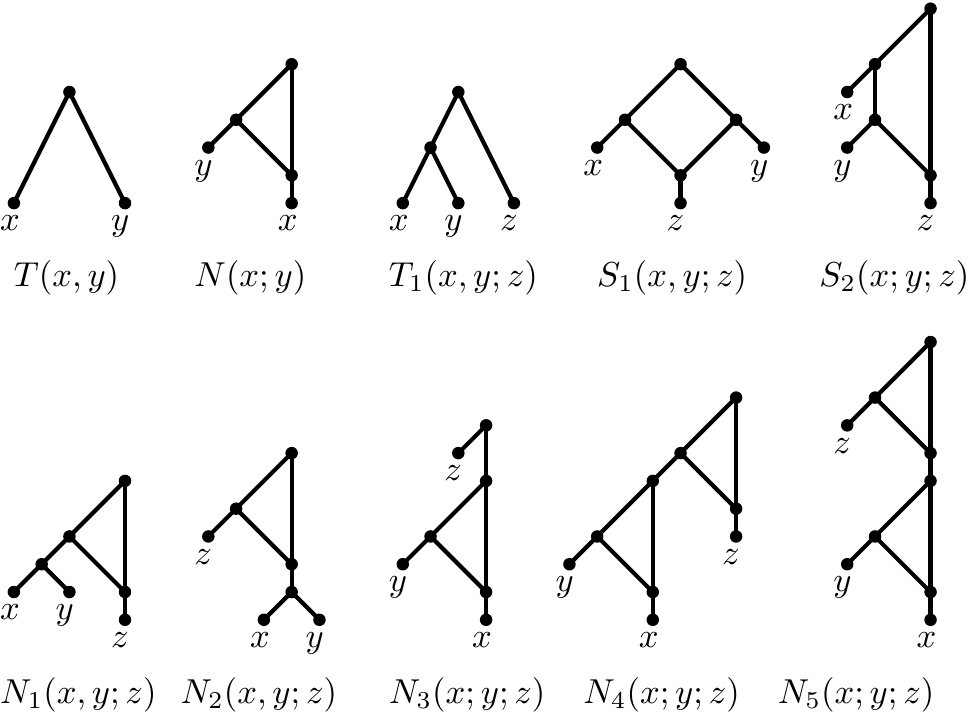}}
  \caption{The two binary level-1 binets and the eight binary level-1 trinets.}
  \label{fig:nets}
\end{figure}

Given a phylogenetic network~$N$ on~$X$ and a subset~$X'\subseteq X$, we define the network~$N|X'$ as the network obtained from~$N$ by deleting all vertices and arcs that are not on a directed path from the lowest stable ancestor of~$X'$ to a leaf in~$X'$ and repeatedly suppressing indegree-1 outdegree-1 vertices and replacing parallel arcs by single arcs until neither operation is applicable.

Two networks~$N,N'$ on~$X$ are said to be \emph{equivalent} if there exists an isomorphism between~$N$ and~$N'$ that maps each leaf of~$N$ to the leaf of~$N'$ with the same label.

Given two networks~$N,N'$ with~$\cL(N')\subseteq \cL(N)$, we say that~$N$ \emph{displays}~$N'$ if~$N|\cL(N')$ is equivalent to~$N'$. Note that this definition in particular applies to the cases that~$N'$ is a binet or trinet. In addition, we say that~$N$ \emph{displays} a set~$\cN$ of networks if~$N$ displays each network in~$\cN$.

Given a network~$N$, we use the notation~$\cT(N)$ to denote the set of all trinets and binets displayed by~$N$. For a set~$\cN$ of networks, $\cT(\cN)$ denotes $\bigcup_{N\in\cN} \cT(N)$. Given a set~$\cT$ of trinets and/or binets on~$X$ and a nonempty subset~$X'\subseteq X$, we define the restriction of~$\cT$ to~$X'$ as
\[ 
\cT|X':= \{T|(\cL(T)\cap X') \quad : \quad T\in\cT \text{ and } |\cL(T)\cap X'| \in \{2,3\} \}.
\]

The following observation will be useful.

\begin{observation}\label{obs:subset}
Let~$\cT$ be  a set of trinets and binets on~$X$  and suppose that there exists  a binary level-1 network $N$ on~$X$ such that~$\cT\subseteq\cT(N)$. Then, for any nonempty subset~$X'$ of~$X$,~$N|X'$ is a binary level-1 network displaying~$\cT|X'$.
\end{observation}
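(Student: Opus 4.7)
The plan is to reduce the observation to a single commutativity lemma for the restriction operation, namely: for any $Y \subseteq Z \subseteq \cL(N)$ with $Y$ nonempty, $(N|Z)|Y$ is equivalent to $N|Y$. Granting this, the observation splits into two claims that I would handle separately: (a) $N|X'$ is a binary level-1 network, and (b) $N|X'$ displays every element of $\cT|X'$.

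Claim (a) I would establish directly from the definition of restriction. By construction, $N|X'$ is obtained from $N$ by first deleting all vertices and arcs not on a directed path from $LSA(X')$ to a leaf of $X'$, and then suppressing indegree-1 outdegree-1 vertices and merging parallel arcs. The resulting graph is rooted, acyclic, has leaf set $X'$, has maximum in- and out-degree two, and by construction has no indegree-1 outdegree-1 vertex; hence it is binary. Moreover, every cycle in $N|X'$ lifts to a cycle in $N$ (with internal vertices possibly suppressed and parallel arcs possibly merged), so pairwise disjointness of cycles is inherited from $N$, yielding the level-1 property.

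Claim (b) is where the commutativity lemma does the real work. Let $T\in\cT$ and set $Y := \cL(T)\cap X'$, assuming $|Y|\in\{2,3\}$. Since $T\in\cT(N)$, we have $T\equiv N|\cL(T)$; applying the commutativity lemma with $Z=\cL(T)$ gives $T|Y \equiv (N|\cL(T))|Y \equiv N|Y$, and applying it with $Z=X'$ gives $(N|X')|Y \equiv N|Y$. Chaining these equivalences yields $(N|X')|Y \equiv T|Y$, so $N|X'$ indeed displays $T|Y$, proving (b).

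The main obstacle is the commutativity lemma itself, which is intuitively clear but requires careful bookkeeping. The key point is that $LSA_N(X')$ is a stable ancestor of $Y$ in $N$: because $Y\subseteq X'$, every directed root-to-$Y$ path in $N$ must already pass through $LSA_N(X')$. Consequently $LSA_N(Y)$ lies on a directed path from $LSA_N(X')$ to $Y$, so the induced sub-DAG of $N$ consisting of all vertices and arcs on paths from $LSA_N(Y)$ to leaves in $Y$ coincides with the corresponding sub-DAG in the unsuppressed precursor of $N|X'$. What remains to verify is that the suppression of indegree-1 outdegree-1 vertices and the merging of parallel arcs commute with further restriction: any vertex that is suppressed while forming $N|X'$ and still lies on a path to some leaf of $Y$ would also have been suppressed in forming $N|Y$ directly, and similarly for parallel arcs. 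A short case analysis on the position of such a vertex (inside or outside a cycle of $N$) settles this, completing the proof.
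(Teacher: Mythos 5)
The paper states this as an unproved observation, so there is no official argument to compare against; the authors evidently regard both claims as immediate from the definition of $N|X'$. Your decomposition into (a) closure of binary level-1 networks under restriction and (b) a commutativity lemma $(N|Z)|Y\equiv N|Y$ for nonempty $Y\subseteq Z\subseteq\cL(N)$ is exactly the right way to make the observation precise, and your derivation of (b) from that lemma via $T|Y\equiv(N|\cL(T))|Y\equiv N|Y\equiv(N|X')|Y$ is correct; part (a) is also argued soundly (cycles of $N|X'$ lift to cycles of $N$, so pairwise disjointness is inherited). The one substantive piece you leave as a sketch is the commutativity lemma itself, and two details there deserve to be spelled out. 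First, you need that $LSA_N(Y)$ is not suppressed when forming $N|Z$ and that it coincides with $LSA_{N|Z}(Y)$: for $|Y|\geq 2$ this holds because $LSA_N(Y)$ must reach $Y$-leaves through two distinct children (otherwise that unique child would be a stable ancestor of $Y$ strictly below it), so it retains outdegree two in every intermediate restriction; for $|Y|=1$ the LSA is the leaf itself and the claim is trivial. Second, the parallel-arc merging: a largish cycle of $N$ may collapse to a pair of parallel arcs at different stages depending on whether one restricts first to $Z$ or directly to $Y$, so your case analysis should confirm the resulting single arc is the same either way. With those two points filled in, your proof is complete and is, if anything, more careful than what the paper provides.
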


\begin{figure}
  \centerline{\includegraphics{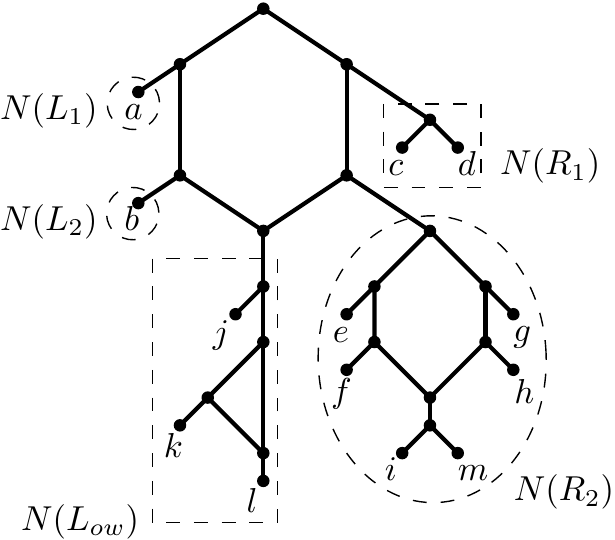}}
  \caption{A binary level-1 network~$N$. Its set of high leaves is~$H=\{a,b,c,d,e,f,g,h,i,m\}$. The bipartition of~$H$ induced by~$N$ is~$\{\{a,b\},\{c,d,e,f,g,h,i,m\}\}$. Hence, $a$ and~$b$ are on the same side in~$N$ and~$e,f,g,h,i$ and~$m$ are on the same side in~$N$. The pendant sidenetworks of~$N$ are~$N(L_1)$, $N(L_2)$, $N(R_1)$ and $N(R_2)$. Leaves~$j,k$ and~$l$ are low in~$N$.}
  \label{fig:network}
\end{figure}

\leooo{We call a network \emph{cycle-rooted} if its root is contained in a cycle. A cycle-rooted network is called  \emph{tiny-cycle rooted} if its root is in a tiny cycle and \emph{largish-cycle rooted} otherwise.} If~$N$ is a cycle-rooted binary level-1 network whose root~$\rho(N)$ is in cycle~$C$, then there exists a unique reticulation~$r$ that is contained in~$C$. We say that a leaf~$x$ is \emph{high} in~$N$ if there exists a path from~$\rho(N)$ to~$x$ that does not pass through~$r$, otherwise we say that~$x$ is \emph{low} in~$N$. If~$N$ is not cycle-rooted, then we define all leaves to be high in~$N$. We say that two leaves are \emph{at the same height} in~$N$ if they are either both high or both low in~$N$. Two leaves~$x,y$ that are both high in~$N$ are said to be \emph{on the same side} in~$N$ if~$x$ and~$y$ are both reachable from the same child of~$\rho(N)$ by two directed paths. A bipartition~$\{L,R\}$ of the set~$H$ of leaves high in~$N$ is the bipartition of~$H$ \emph{induced} by~$N$ if all leaves in~$L$ are on the same side $S_L$ in~$N$ and all the leaves in~$R$ are on the same side $S_R$ in~$N$, with $S_L\neq S_R$. 
Finally, if~$N$ is cycle-rooted, we say that a subnetwork~$N'$ of~$N$ is a \emph{pendant subnetwork} if there exists some cut-arc~$(u,\rho')$ in~$N$ with~$\rho'$ the root of~$N'$ and~$u$ a vertex of the cycle containing the root of~$N$. If, in addition,~$u$ is not a reticulation, then~$N'$ is said to be a \emph{pendant sidenetwork} of~$N$. If, in addition,~$\cL(N')\subseteq S$ with~$S$ a part of the bipartition of the high leaves of~$N$ induced by~$N$, then we say that~$N'$ is a \emph{pendant sidenetwork on side}~$S$. See Figure~\ref{fig:network} for an illustration of these definitions.

\section{Constructing a network from a set of binets}\label{sec:binets}

In this section we describe a polynomial-time algorithm for deciding if there exists some binary level-1 network displaying a given set~$\dB$ of binets, and constructing such a network if it exists. We treat this case separately because it is much simpler than the trinet algorithms and gives an introduction to the techniques we use.

The first step of the algorithm is to construct the graph $\cR^b(\dB)$\footnote{The superscript~$b$ indicates that this definition is only used for binets. In Section~\ref{sec:nondense}, we will introduce a graph $\cR(\cT)$ which will be used for general sets of binets and trinets and is a generalisation of  $\cR^b(\dB)$ in the sense that  $\cR^b(\dB) =  \cR(\dB)$ if~$\dB$ contains only binets.}, which has a vertex for each taxon and an edge~$\{x,y\}$ if (at least) one of $N(x;y)$ and $N(y;x)$ is contained in~$\dB$.

If the graph $\cR^b(\dB)$ is disconnected and has connected components~$X_1,\ldots,X_p$, then the algorithm constructs a network~$N$ by recursively computing networks $N|X_1,\ldots ,N|X_p$ displaying $\dB|X_1,\ldots ,\dB|X_p$ respectively, creating a new root node~$\rho$ and adding arcs from~$\rho$ to the roots of~$N|X_1,\ldots ,N|X_p$, and refining arbitrarily the root~$\rho$ in order to make the network binary. See Figure~\ref{fig:binetsex} for an example.

\begin{figure}
  \centerline{\includegraphics{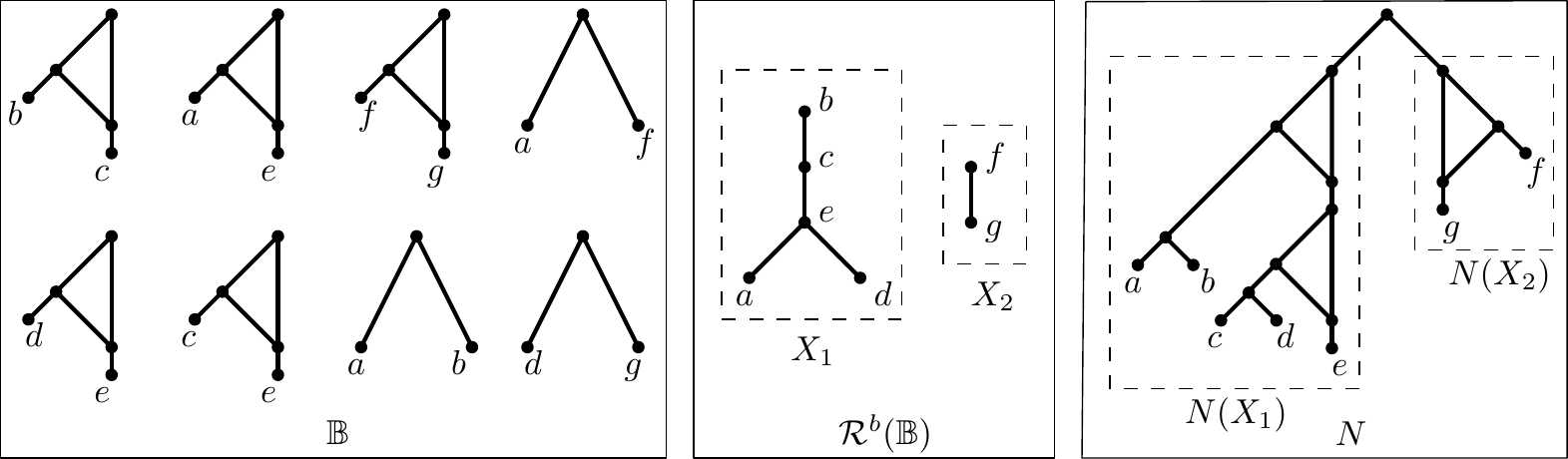}}
  \caption{Example of a step of the algorithm for constructing a network~$N$ from the set~$\dB$ of binets. The graph~$\cR^b(\dB)$ has connected components $X_1=\{a,b,c,d,e\}$ and~$X_2=\{f,g\}$. Hence, network~$N$ is obtained by combining recursively computed networks~$N(X_1)$ and~$N(X_2)$ by hanging them below a new root. See Figure~\ref{fig:binetsex2} for the first recursive step.}
  \label{fig:binetsex}
\end{figure}

If the graph $\cR^b(\dB)$ is connected, then the algorithm constructs the graph~$\cK^b(\dB)$, which has a vertex for each taxon and an edge~$\{x,y\}$ precisely if~$T(x,y)\in\dB$. In addition, the algorithm constructs the directed graph $\Omega^b(\dB)$, which has a vertex for each connected component of $\cK^b(\dB)$ and an arc $(\pi_1,\pi_2)$ precisely if there exists a binet $N(y;x)\in\dB$ with $x\in V(\pi_1)$ and $y\in V(\pi_2)$ (with~$V(\pi)$ denoting the vertex set of a given connected component~$\pi$).

The algorithm searches for a nonempty strict subset~$U$ of the vertices of $\Omega^b(\dB)$ such that there is no arc~$(\pi_1,\pi_2)$ with $\pi_1\notin U$ and $\pi_2\in U$. If there exists no such set~$U$ then the algorithm halts and outputs that there exists no solution. Otherwise, let~$H$ be the union of the vertex sets of the connected components of $\cK^b(\dB)$ that correspond to elements of~$U$ and define~$\low=X\setminus H$. The algorithm recursively constructs networks $N(H)$ displaying $\dB|H$ and $N(\low)$ displaying $\dB|\low$. Subsequently, the algorithm constructs a network~$N$ consisting of vertices~$\rho,v,r$, arcs~$(\rho,v)$, $(v,r)$, $(\rho,r)$, networks~$N(\low), N(H)$ and an arc from~$v$ to the root of~$N(H)$ and an arc from~$r$ to the root of~$N(\low)$. See Figure~\ref{fig:binetsex2} for an example of this case.

\begin{figure}
  \centerline{\includegraphics{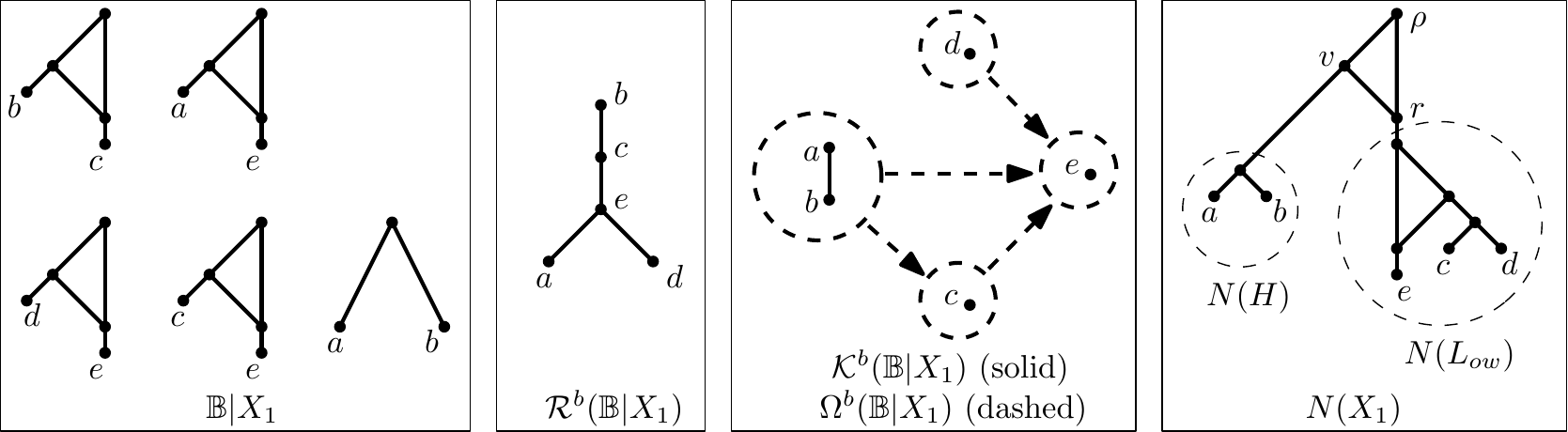}}
  \caption{Example of the recursive step, which constructs a network~$N(X_1)$ from binet set~$\dB|X_1$, with~$\dB$ and~$X_1$ as in Figure~\ref{fig:binetsex}. Network $N(X_1)$ is cycle-rooted because graph~$\cR^b$ is connected. One possible strict subset of the vertices of $\Omega^b(\dB|X_1)$ with no incoming arcs is~$\{a,b\}$. Hence, $H=\{a,b\}$ can be made the high leaves of the network, and $\low=\{c,d,e\}$ the low leaves. Combining recursively computed networks $N(H)$ and $N(\low)$ by hanging them below a new cycle as described by the algorithm then gives network~$N(X_1)$. Note that other valid subsets of the vertices of $\Omega^b(\dB|X_1)$ are~$\{a,b,d\}$, $\{a,b,c\}$, $\{a,b,c,d\}$ and $\{d\}$, which lead to alternative solutions.}
  \label{fig:binetsex2}
\end{figure}

\leooo{Finally, when $|X|\leq 2$ (in some recursive step), the problem can be solved trivially. In particular, when~$|X|=1$, the algorithm outputs a network consisting of a single vertex labelled by the only element of~$X$, which is the root as well as the leaf of the network.}

This completes the description of the algorithm for binets. Its correctness is shown in the following theorem.

\begin{theorem}\label{thm:binets}
Given a set~$\dB$ of binets on a set~$X$ of taxa, there exists a polynomial-time algorithm that decides if there exists a binary level-1 network on~$X$ that displays all binets in~$\dB$, and that constructs such a network if it exists.
\end{theorem}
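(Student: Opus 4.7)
The plan is to prove this by strong induction on $|X|$, establishing simultaneously (a) the algorithm terminates in polynomial time, (b) \emph{soundness} --- any network returned by the algorithm is a binary level-1 network displaying $\dB$, and (c) \emph{completeness} --- if some binary level-1 network displays $\dB$, then the algorithm does not fail. The base case $|X|\le 2$ is immediate. Constructing $\cR^b(\dB)$, $\cK^b(\dB)$, and $\Omega^b(\dB)$ takes time polynomial in $|X|+|\dB|$, and finding a nonempty strict subset $U$ of vertices of $\Omega^b(\dB)$ with no arc entering $U$ (equivalently, a suitable nonempty strict union of sink strongly connected components after contraction) is polynomial too. Each recursive call is on a strictly smaller taxa set --- the $X_i$ in the disconnected case, or $H=\bigcup_{\pi\in U}V(\pi)$ and $\low = X\setminus H$ in the connected case, both nonempty strict subsets by the choice of $U$ --- so the overall running time is polynomial.

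For soundness I would case-analyse each binet $B\in\dB$ and verify that $N|\cL(B)$ equals $B$. In the $\cR^b$-disconnected case, any cycle binet has both leaves in the same component $X_i$ (since it creates an edge in $\cR^b$), so it is displayed by $N|X_i$ via induction; a tree binet $T(x,y)$ with $x,y\in X_i$ is displayed by induction, while one with $x$ and $y$ in different components is displayed by the tree-like top level of $N$. In the $\cR^b$-connected case, tree binets cannot cross the $H/\low$ split because they lie inside a single $\cK^b$-component, so by the choice of $U$ as a union of such components they are handled inductively through the pendant sidenetworks; the remaining cycle binets crossing the split are exactly those allowed by the no-incoming-arc condition on $U$, and are precisely what the tiny cycle built at the root of $N$ displays.

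For completeness, let $N^*$ be any binary level-1 network displaying $\dB$. If $\cR^b(\dB)$ is disconnected, I would first argue that $\rho(N^*)$ is not in a cycle: a cycle at the root would force every pair of leaves inducing a nontrivial binet to lie in the same $\cR^b$-component, contradicting disconnectedness. Thus the top-level tree structure of $N^*$ refines the partition $X_1,\ldots,X_p$, and by Observation~\ref{obs:subset} each $\dB|X_i$ is displayed by $N^*|X_i$, so induction applies. If $\cR^b(\dB)$ is connected, I would argue $N^*$ must be cycle-rooted and extract a valid $U^*$ from $N^*$ by taking the set of $\cK^b$-components contained in the high leaves of $N^*$. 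The no-incoming-arc condition for $U^*$ then follows from the orientation convention for cycle binets (the ``low'' leaf of $N(y;x)$ being the reticulated one), so every arc of $\Omega^b(\dB)$ runs from a high component to a low component. Observation~\ref{obs:subset} then gives that $\dB|H$ and $\dB|\low$ are displayed by $N^*|H$ and $N^*|\low$, closing the induction.

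The main obstacle is the completeness argument in the $\cR^b$-connected case, and in particular the claim that $\cK^b$-components are height-homogeneous in any displaying network. The key point is that an edge $\{x,y\}$ of $\cK^b$ arises from $T(x,y)\in\dB$, and $N^*$ displaying $T(x,y)$ forces $N^*|\{x,y\}$ to be a tree rather than a cycle binet, which in turn forces $x$ and $y$ to be at the same height in $N^*$. Iterating along paths in $\cK^b$ then shows entire components have a common height, so the set $U^*$ defined above is well-posed. Once this is settled, everything else in the proof is a routine verification of the definitions of ``displays'' and of the auxiliary graphs.
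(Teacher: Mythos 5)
Your proposal is correct and follows essentially the same route as the paper: induction on $|X|$, a case split on the connectivity of $\cR^b(\dB)$, height-homogeneity of $\cK^b$-components (via the $T(x,y)$ binets), and extraction of a set $U$ with no incoming $\Omega^b$-arcs from the high leaves of a hypothetical solution. One small caveat: your claim in the disconnected case that a displaying network $N^*$ cannot be cycle-rooted is false in general (e.g.\ $\dB=\emptyset$ is displayed by a cycle-rooted network), but this step is not needed --- Observation~\ref{obs:subset} already gives that each $\dB|X_i$ is displayed by $N^*|X_i$, which is all the induction requires, and the paper argues exactly this way.
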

\begin{proof}
We prove by induction on~$|X|$ that the algorithm described above produces a binary level-1 network on~$X$ displaying~$\dB$ if such a network exists. The induction basis for~$|X|\leq 2 $ is clearly true. Now let~$|X|\geq 3$, $\dB$ a set of binets on~$X$ and assume that there exists some binary level-1 network on~$X$ displaying~$\dB$. There are two cases.

\leoo{First assume that the graph $\cR^b(\dB)$ is disconnected and has connected components $C_1,\ldots,C_p$. Then the algorithm recursively computes networks $N|C_1,\ldots ,N|C_p$ displaying the sets $\dB|C_1,\ldots ,\dB|C_p$ respectively. Such networks exist by Observation~\ref{obs:subset} and can be found by the algorithm by induction. It follows that the network~$N$ which is constructed by the algorithm displays all binets in~$\dB$ of which both taxa are in the same connected component of $\cR^b(\dB)$. Each other binet is of the form~$T(x,y)$ by the definition of graph $\cR^b(\dB)$. Hence, those binets are also displayed by~$N$, by construction.}

\leoo{Now assume that the graph $\cR^b(\dB)$ is connected. Then we claim that there exists no binary level-1 network displaying~$\dB$ that is not cycle-rooted. To see this, assume that there exists such a network, let~$v_1,v_2$ be the two children of its root and~$X_i$ the leaves reachable by a directed path from~$v_i$, for~$i=1,2$. Then there is no edge~$\{a,b\}$ in~$\cR^b(\dB)$ for any $a\in X_1$ and~$b\in X_2$. Since $X_1\cup X_2=X$, it follows that~$\cR^b(\dB)$ is disconnected, which is a contradiction. Hence, any network that is a valid solution is cycle-rooted. The algorithm then searches for a nonempty strict subset~$U$ of the vertices of $\Omega^b(\dB)$ with no incoming arc, i.e., for which there is no arc~$(\pi_1,\pi_2)$ with $\pi_1\notin U$ and $\pi_2\in U$.}

\leoo{First assume that there exists no such set~$U$. Then the algorithm reports that there exists no solution. To prove that this is correct, assume that~$N'$ is some binary level-1 network on~$X$ displaying~$\dB$ and let~$H$ be the set of leaves that are high in~$N'$. The graph $\cK^b(\dB)$ contains no edges between taxa that are high in~$N'$ and taxa that are low in~$N'$ (because such taxa~$x,y$ cannot be together in a $T(x,y)$ binet). Hence, the set~$H$ is a union of vertex sets of connected components of~$\cK^b(\dB)$ and their representing vertices of $\Omega^b(\dB)$ form a subset~$U$. Moreover, there is no arc~$(\pi_1,\pi_2)$ in $\Omega^b(\dB)$ with $\pi_1\notin U$ and $\pi_2\in U$ because the corresponding binet $N(y;x)$ would not be displayed by~$N'$. Hence, we have obtained a contradiction to the assumption that there is no such set~$U$.}

\leoo{Now assume that there exists such a set~$U$. Then the algorithm recursively constructs networks $N(H)$ displaying $\dB|H$ and $N(\low)$ displaying $\dB|\low$, with~$H$ the union of the vertex sets of the connected components of $\cK^b(\dB)$ corresponding to the elements of~$U$, and with $\low=X\setminus H$. The algorithm then constructs a network~$N$ consisting of a cycle with networks~$N(H)$ hanging from the side of the cycle and network~$N(\low)$ hanging below the cycle, as in Figure~\ref{fig:binetsex2}. Networks $N(H)$ and $N(\low)$ exist by Observation~\ref{obs:subset} and can be found by the algorithm by induction. Because these networks display $\dB|H$ and $\dB|\low$ respectively, each binet from~$\dB$ that has both its leaves high or both its leaves low in~$N$ is displayed by~$N$. Each other binet is of the form $N(x;y)$ with~$x$ low and~$y$ high in~$N$, because otherwise there would exist an element in~$U$ which would have an incoming arc in~$\Omega^b(\dB)$. Hence, such binets are also displayed by~$N$.}
\end{proof}

\section{Constructing a network from a set of binets and trinets}\label{sec:nondense}

\subsection{Outline}\label{subsec:outline}

Let~$\cT$ be a set of binary level-1 trinets and binets on a set~$X$ of taxa. In this section we will describe an exponential-time algorithm for deciding whether there exists a binary level-1 network~$N$ on~$X$ with~$\cT\subseteq \cT(N)$ \cel{(Note that, if $\cT$ contains trinets or binets that are of a level greater than one, we know that such a network cannot exist)}.

Throughout this section, we will assume that there exists some binary level-1 network on~$X$ that displays~$\cT$ and we will show that in this case we can reconstruct such a network~$N$. 

Our approach aims at constructing the network~$N$ recursively; the recursive steps that are used depend on the structure of~$N$. The main steps of our approach are the following:

\begin{description}
\item[1] we determine whether the network~$N$ is cycle-rooted (see Section \ref{subsec:root});
\item[2] if this is the case, we guess the \emph{high} and \emph{low} leaves of~$N$ (see Section \ref{subsec:high});
\item[3] then, we guess how to partition the high leaves into the ``left'' and ``right'' leaves (see Section \ref{subsec:leftright});
\item[4] finally, we determine how to partition the leaves on each side into the leaves of the different sidenetworks on that side (see Section~\ref{subsec:sidenetworks}).
\end{description}

\leoo{Although we could do Steps 2 and 3 in a purely brute force way, we present several structural lemmas which restrict the search space and will be useful in Section~\ref{sec:tinycycle}.}

Once we have found a correct partition of the leaves (i.e., after Step~4), we recursively compute networks for each block of the partition and combine them into a single network. In the case that the network is not cycle-rooted, we do this by creating a root and adding arcs from this root to the recursively computed networks. Otherwise, the network is cycle-rooted. In this case, we construct a cycle with outgoing cut-arcs to the roots of the recursively computed networks, as illustrated in Figure~\ref{fig:network}.

The fact that we can recursively compute networks for each block of the computed partition follows from Observation~\ref{obs:subset}.

In the next sections we present a detailed description of our algorithm to reconstruct $N$. We will illustrate the procedure by applying it to the example set of trinets depicted in Figure~\ref{fig:exnets}.

\begin{figure}
  \centerline{\includegraphics{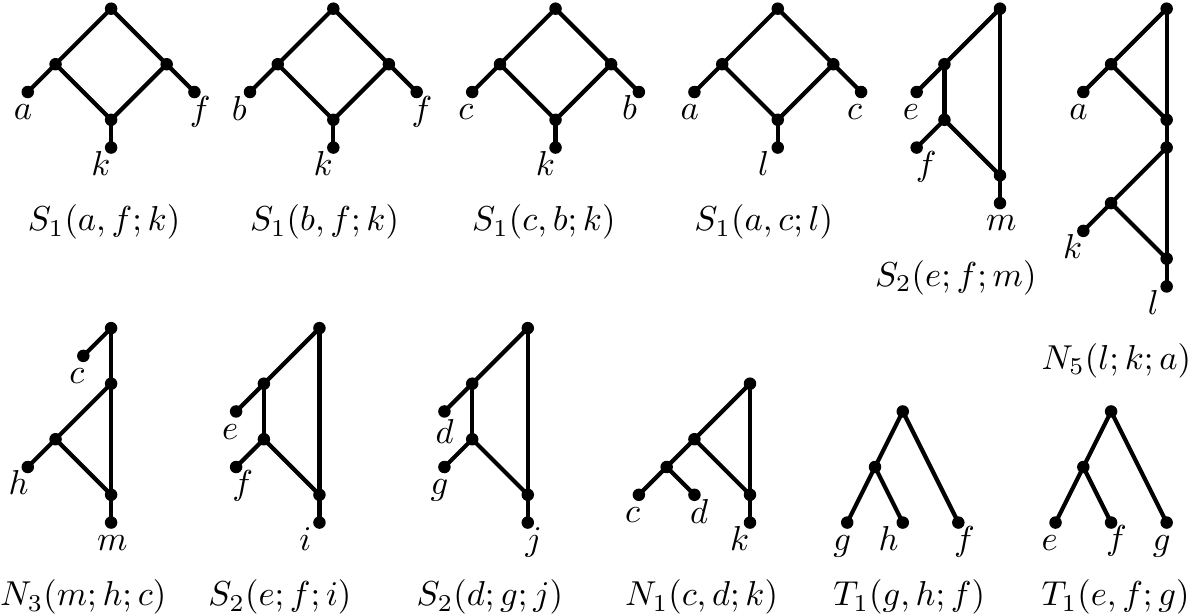}}
  \caption{The set~$\cT$ of trinets that we use to illustrate the inner workings of our algorithm.}
  \label{fig:exnets}
\end{figure}

\subsection{Is the network cycle-rooted?}\label{subsec:root}

To determine whether or not~$N$ is cycle-rooted, we define a graph~$R(\cT)$ as follows. The vertex set of~$R(\cT)$ is the set~$X$ of taxa and the edge set has an edge~$\{a,b\}$ if there exists a trinet or binet~$T\in\cT$ with~$a,b\in \cL(T)$ that is cycle-rooted or contains a common ancestor of~$a$ and~$b$ different from the root of~$T$ (or both). For an example, see Figure~\ref{fig:R}.

\begin{figure}
 \centering
  \includegraphics{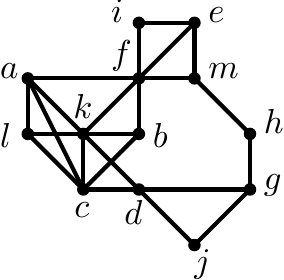}
 \caption{The graph~$\cR(\cT)$ for the set~$\cT$ of trinets in Figure~\ref{fig:exnets}. Since~$\cR(\cT)$ is connected, any network displaying~$\cT$ is cycle-rooted.}
 \label{fig:R}
\end{figure}

\begin{lemma}\label{lem:disconnected}
Let~$N$ be a binary level-1 network and~$\cT\subseteq\cT(N)$. If~$R(\cT)$ is disconnected and has connected components~$C_1,\ldots,C_p$, then~$\cT$ is displayed by the binary level-1 network~$N'$ obtained by creating a new root~$\rho$ and adding arcs from~$\rho$ to the roots of~$N|C_1,\ldots ,N|C_p$, and refining arbitrarily the root~$\rho$ in order to make the resulting network binary.
\end{lemma}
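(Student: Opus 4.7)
The plan is to show that $N'$ displays every binet or trinet $T\in\cT$, splitting into cases according to how $\cL(T)$ meets the connected components of $R(\cT)$. Preliminarily, I would note that $N'$ is indeed a binary level-1 network: each $N|C_i$ is binary level-1 as a restriction of the binary level-1 network~$N$, and combining them under a new root that is then refined into a binary tree only adds tree-like structure at the top, so no two cycles of $N'$ can share a vertex.

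In the first case, $\cL(T)\subseteq C_i$ for some single $i$. Then $T=T|(\cL(T)\cap C_i)\in\cT|C_i$, and Observation~\ref{obs:subset} applied to $N|C_i$ gives that $N|C_i$ displays~$T$. Since $N|C_i$ hangs below the refined root of $N'$, restricting $N'$ to $\cL(T)$ coincides with restricting $N|C_i$ to $\cL(T)$, which yields~$T$.

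The interesting case is when $\cL(T)$ meets at least two components. Here the key structural observation is that $T$ cannot be cycle-rooted, since otherwise every pair of leaves of $T$ would form an edge of $R(\cT)$ by definition and so lie in a single component; similarly, no two leaves of $T$ from different components can share a common ancestor in $T$ other than its root. In any binary rooted network with three leaves that is not cycle-rooted, some two leaves share a common ancestor other than the root, so $\cL(T)$ can in fact meet at most two components: for a binet this simply forces $T=T(a,b)$, and for a trinet $T$ must have one root-child leading to the unique leaf lying in the singleton component and the other root-child leading to a subnetwork on the two leaves from the doubleton component. On the $N'$ side, the lowest stable ancestor of $\cL(T)$ lies on the refined tree above $N|C_1,\ldots,N|C_p$, and after suppressing indegree-1 outdegree-1 vertices and parallel arcs the restriction $N'|\cL(T)$ also separates the leaves from distinct components at its root; within each component the restriction collapses to $N|(\cL(T)\cap C_i)$, which equals $T|(\cL(T)\cap C_i)$ because $\cT\subseteq\cT(N)$, so matching the two pieces gives $N'|\cL(T)$ equivalent to~$T$.

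The main obstacle is the bookkeeping in this second case: one has to verify that an arbitrary binary refinement of the new root, which may introduce several auxiliary tree vertices above $N|C_1,\ldots,N|C_p$, still collapses under restriction to a two- or three-leaf subset spanning several $C_i$ to exactly the root-separated shape predicted by $T$. This becomes routine once cycle-rooted $T$ and non-root cross-component common ancestors have both been excluded, because all the auxiliary tree vertices then become indegree-1 outdegree-1 after deletion of branches disjoint from $\cL(T)$, and are suppressed in forming the restriction.
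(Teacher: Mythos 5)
Your proposal is correct and follows essentially the same route as the paper: use Observation~\ref{obs:subset} for binets and trinets confined to one component, and for those spanning several components use the definition of $R(\cT)$ to rule out cycle-rooted $T$ and cross-component common ancestors below the root, which forces $T$ to split $2$--$1$ at its root (and rules out three leaves in three distinct components), after which $N'$ displays $T$ by matching the remaining binet inside the relevant $N|C_i$. The only cosmetic difference is that the paper enumerates the surviving trinet types ($T_1$, $N_3$) explicitly from Figure~\ref{fig:nets}, whereas you argue structurally about the root's children; both are sound.
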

\begin{proof}
By Observation~\ref{obs:subset}, $N'$ displays each binet and each trinet of~$\cT$ whose leaves are all in the same connected component of~$R(\cT)$. Consider a binet~$B\in\cT$ on~$\{a,b\}$ with~$a$ and~$b$ in different components. Then there is no edge~$\{a,b\}$ in~$R(\cT)$ and hence~$B$ is not cycle-rooted, i.e.~$B=T(a,b)$, and~$B$ is clearly displayed by~$N'$. Now consider a trinet~$T\in\cT$ on~$\{a,b,c\}$ with~$a,b,c$ in three different components. Then, none of~$\{a,b\},\{b,c\}$ and~$\{a,c\}$ is an edge in~$R(\cT)$. Hence, none of the pairs~$\{a,b\},\{b,c\},\{a,c\}$ has a common ancestor other than the root of~$T$. Employing Figure~\ref{fig:nets}, this is impossible and so~$T$ cannot exist. Finally, consider a trinet~$T'\in\cT$ on~$\{a,b,c\}$ with~$a,b\in C_i$ and~$c\in C_j$ with~$i\neq j$. Then there is no edge~$\{a,c\}$ and no edge~$\{b,c\}$ in~$R(\cT)$. Consequently,~$T'$ is not cycle-rooted and the pairs~$\{a,c\}$ and~$\{b,c\}$ do not have a common ancestor in~$T'$ other than the root of~$T'$. Hence, $T'\in\{T_1(a,b;c) , N_3(a;b;c), N_3(b;a;c)\}$. If~$T'=T_1(a,b;c)$, then~$N|C_i$ displays~$T'|C_i=T(a,b)$ and hence~$N'$ displays~$T'$. If~$T'=N_3(a;b;c)$, then~$N|C_i$ displays~$T'|C_i=N(a;b)$ and, so,~$N'$ displays~$T'$. Symmetrically, if~$T'=N_3(b;a;c)$, then~$N|C_i$ displays~$T'|C_i=N(b;a)$ and, so,~$N'$ displays~$T'$. We conclude that~$N'$ displays~$\cT$.
\end{proof}

Hence, if~$R(\cT)$ is disconnected, we can recursively reconstruct a network for each of its connected components and combine the solutions to the subproblems in the way detailed in Lemma~\ref{lem:disconnected}.  

\cel{If $R(\cT)$ is connected, then we can apply the following lemma:}

\begin{lemma}\label{lem:cycle}
Let~$N$ be a binary level-1 network on~$X$ and~$\cT\subseteq\cT(N)$. If~$R(\cT)$ is connected and~$|X|\geq 2$, then~$N$ is cycle-rooted.
\end{lemma}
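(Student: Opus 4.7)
The plan is to prove the contrapositive: assuming $N$ is a binary level-1 network on $X$ displaying $\cT$ that is \emph{not} cycle-rooted, I will show that $R(\cT)$ is disconnected. First I would let $v_1,v_2$ be the two children of the root $\rho(N)$ (they exist and are distinct because $|X|\geq 2$ and $N$ is binary) and, for $i\in\{1,2\}$, let $X_i$ be the set of leaves of $N$ reachable by a directed path from $v_i$. Both $X_i$ are nonempty, and $X_1\cup X_2 = X$. The key preliminary step is to argue that $X_1\cap X_2=\emptyset$: any leaf reachable from both $v_1$ and $v_2$ would yield two internally vertex-disjoint directed paths from $\rho(N)$ to that leaf, i.e.\ a cycle containing~$\rho(N)$, contradicting the assumption that $N$ is not cycle-rooted.

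Next I would show that no edge of $R(\cT)$ has one endpoint in $X_1$ and the other in $X_2$. Pick any $T\in\cT$ together with leaves $a\in \cL(T)\cap X_1$ and $b\in \cL(T)\cap X_2$; it suffices to argue that $T$ is not cycle-rooted and that the only common ancestor of $a$ and $b$ in $T$ is its root. Because $N$ displays $T$, we have $T\equiv N|\cL(T)$. Since $\cL(T)$ intersects both $X_1$ and $X_2$ while $X_1\cap X_2=\emptyset$, every directed path from $\rho(N)$ to a leaf in $\cL(T)\cap X_1$ must use $v_1$, and every path to a leaf in $\cL(T)\cap X_2$ must use $v_2$; hence $LSA(\cL(T))=\rho(N)$, and the root of $N|\cL(T)$ is (the image of) $\rho(N)$ with images $v_1',v_2'$ of $v_1,v_2$ as its children.

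It remains to verify in $T=N|\cL(T)$ two facts: (i) no cycle contains the root $\rho(T)$, and (ii) $a$ and $b$ have no common ancestor in $T$ other than $\rho(T)$. For (i), a root cycle in $T$ would correspond to two internally vertex-disjoint paths from $\rho(N)$ meeting at a reticulation; but such a reticulation would be reachable from both $v_1$ and $v_2$ in $N$, producing a descendant leaf in $X_1\cap X_2$, a contradiction. For (ii), any common ancestor of $a,b$ in $T$ is an ancestor of $b$ distinct from $\rho(T)$; it cannot be $v_1'$ nor a proper descendant of $v_1'$ (since $b\notin X_2$ would otherwise be reachable from $v_1$ in $N$), and symmetrically for $v_2'$, so it must equal $\rho(T)$. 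Together (i) and (ii) show that $\{a,b\}$ is not an edge of $R(\cT)$.

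Consequently $R(\cT)$ contains no edge between $X_1$ and $X_2$, and since both sets are nonempty the graph is disconnected, contradicting the hypothesis. Thus $N$ must be cycle-rooted, completing the proof. The main subtlety I anticipate is the bookkeeping around the restriction operation $N|\cL(T)$: one has to check carefully that the ancestry relationships in $T$ faithfully reflect those in $N$ (in particular, that suppression of indegree-1 outdegree-1 vertices and replacement of parallel arcs cannot introduce spurious common ancestors across the $v_1/v_2$ divide), which is precisely what the argument above exploits via the non-intersection $X_1\cap X_2=\emptyset$.
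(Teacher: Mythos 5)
Your proposal is correct and takes essentially the same route as the paper's own proof: both argue the contrapositive by splitting $X$ into the leaf sets $X_1,X_2$ below the two root children, observing $X_1\cap X_2=\emptyset$, and showing no $R(\cT)$-edge can cross between them. The only difference is that you spell out in detail the facts the paper simply asserts (that any $T$ containing $a\in X_1$ and $b\in X_2$ is not cycle-rooted and has no common ancestor of $a,b$ below its root), which is a harmless elaboration.
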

\begin{proof}
Suppose to the contrary that $R(\cT)$ is connected and that~$N$ is not cycle-rooted. Let~$v_1,v_2$ be the two children of the root of~$N$ and~$X_i$ the leaves of~$N$ reachable by a directed path from~$v_i$, for~$i=1,2$. \leooo{Note that $X_1\cap X_2=\emptyset$ and $X_1\cup X_2=X$.} Let~$a\in X_1$ and~$b\in X_2$ and let~$T$ be any trinet or binet displayed by~$N$ that contains~$a$ and~$b$. Then we have that~$T$ is not cycle-rooted and that the only common ancestor of~$a$ and~$b$ in~$T$ is the root of~$T$. Hence, there is no edge~$\{a,b\}$ in~$R(\cT)$ for any $a\in X_1$ and~$b\in X_2$, which implies that~$R(\cT)$ is disconnected; a contradiction.
\end{proof}

\cel{In the remainder of this section, we assume that~$R(\cT)$ is connected and thus that~$N$ is cycle-rooted.}

\subsection{Separating the high and the low leaves\label{subsec:high}} 

We define a graph~$\cK(\cT)$ whose purpose is to help decide which leaves are at the same height in~$N$. The vertex set of~$\cK(\cT)$ is the set of taxa~$X$ and the edge set contains an edge~$\{a,b\}$ if there exists a trinet or binet~$T\in\cT$ with~$a,b\in \cL(T)$ and in which~$a$ and~$b$ are at the same height in~$T$.

\begin{lemma}\label{lem:sameheight}
Let~$N$ be a cycle-rooted binary level-1 network and~$\cT\subseteq\cT(N)$. If~$C$ is a connected component of~$\cK(\cT)$, then all leaves in~$C$ are at the same height in~$N$.
\end{lemma}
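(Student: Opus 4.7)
My plan is to reduce to the single-edge case and conclude by transitivity. Since ``at the same height in $N$'' partitions $\cL(N)$ into at most two classes (the high leaves and the low leaves) and is therefore an equivalence relation on $\cL(N)$, it suffices to prove that every edge $\{a,b\}$ of $\cK(\cT)$ has its endpoints at the same height in $N$; transitivity along paths inside a connected component of $\cK(\cT)$ then yields the lemma.

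So I would fix such an edge $\{a,b\}$, arising from some $T\in\cT$ with $a,b\in\cL(T)$ at the same height in $T$. Since $T\in\cT(N)$, $T$ may be identified with $N|\cL(T)$. I argue by contradiction: assume $a$ is high in $N$ and $b$ is low in $N$ (the reverse case is symmetric), and let $r$ denote the unique reticulation on the root cycle $C$ of $N$. The goal of the contradiction will be to show that under this assumption $N|\cL(T)$ is cycle-rooted with reticulation $r$, and that in it $a$ is high while $b$ is low, directly contradicting $a,b$ being at the same height in $T$.

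The first structural step is to identify $LSA(\cL(T))$ in $N$ as $\rho(N)$. The vertex $r$ is not a stable ancestor of $\{a\}$ because $a$ is high, so some $\rho(N)$-to-$a$ path avoids $r$; an intermediate vertex of $C$ on either side is not a stable ancestor of $\{b\}$ because the other side of $C$ supplies a $\rho(N)$-to-$b$ path that avoids it; and no descendant of $r$ is an ancestor of $a$ by the level-1 property together with $a$ being high. Hence $\rho(N)$ is the only candidate, so $\rho(N|\cL(T))=\rho(N)$. The second structural step is to show that the cycle $C$ survives in $N|\cL(T)$ with $r$ still a reticulation: both $\rho(N)$-to-$r$ paths along $C$ are retained because each lies on a path from $\rho(N)$ to the low leaf $b$; the side of $C$ containing the sidenetwork of $a$ keeps at least one internal vertex (namely the attachment of that sidenetwork), which prevents that side from collapsing into an arc parallel to the other side of $C$; and $r$ keeps its two incoming arcs (one from the end of each surviving side) together with its unique outgoing arc to the restricted low subnetwork containing $b$, so it remains a reticulation. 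It then follows that in $N|\cL(T)$ the leaf $a$ is reachable from $\rho(N)$ through its sidenetwork along a path avoiding $r$ and so is high, while every $\rho(N)$-to-$b$ path still passes through $r$ and so $b$ is low, giving the desired contradiction.

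The main obstacle is the book-keeping for the restriction $N|\cL(T)$: verifying that $C$ really does persist (and does not degenerate by creating parallel arcs at $\rho(N)$ that get merged or by suppressing every internal vertex on one side), and certifying that $r$ is the reticulation of the surviving root cycle. Once these structural facts are in place, the contradiction with ``$a,b$ at the same height in $T$'' is immediate, and the lemma follows by the reduction to single edges together with transitivity.
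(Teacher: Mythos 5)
Your proposal is correct and follows essentially the same route as the paper: the paper's proof also reduces to showing that no edge of $\cK(\cT)$ can join a leaf that is high in $N$ to one that is low in $N$, asserting that any displayed binet or trinet containing both is cycle-rooted with the high leaf high and the low leaf low. Your write-up simply supplies the structural verification (that $LSA(\cL(T))=\rho(N)$, that the root cycle and its reticulation survive the restriction, and hence that the heights are preserved) which the paper leaves implicit.
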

\begin{proof}
We prove the lemma by showing that there is no edge in~$\cK(\cT)$ between any two leaves that are not at the same height in~$N$. Let~$h$ and~$\ell$ be leaves that are, respectively, high and low in~$N$. Then, in any trinet or binet~$T$ displayed by~$N$ that contains~$h$ and~$\ell$, we have that~$T$ is cycle-rooted and that~$h$ is high in~$T$ and~$\ell$ is low in~$T$. Hence, there is no edge~$\{h,\ell\}$ in~$\cK(\cT)$.
\end{proof}

We now define a directed graph~$\Omega(\cT)$ whose purpose is to help decide which leaves are high and which ones are low in~$N$. The vertex set of~$\Omega(\cT)$ is the set of connected components of~$\cK(\cT)$ and the arc set contains an arc~$(\pi,\pi')$ precisely if there is a cycle-rooted binet or trinet~$T\in\cT$ with~$h,\ell\in \cL(T)$ with~$h\in V(\pi)$ high in~$T$,~$\ell\in V(\pi')$ low in~$T$. See Figure~\ref{fig:K-Omega} for an example for both graphs.

\begin{figure}
 \centering
  \includegraphics{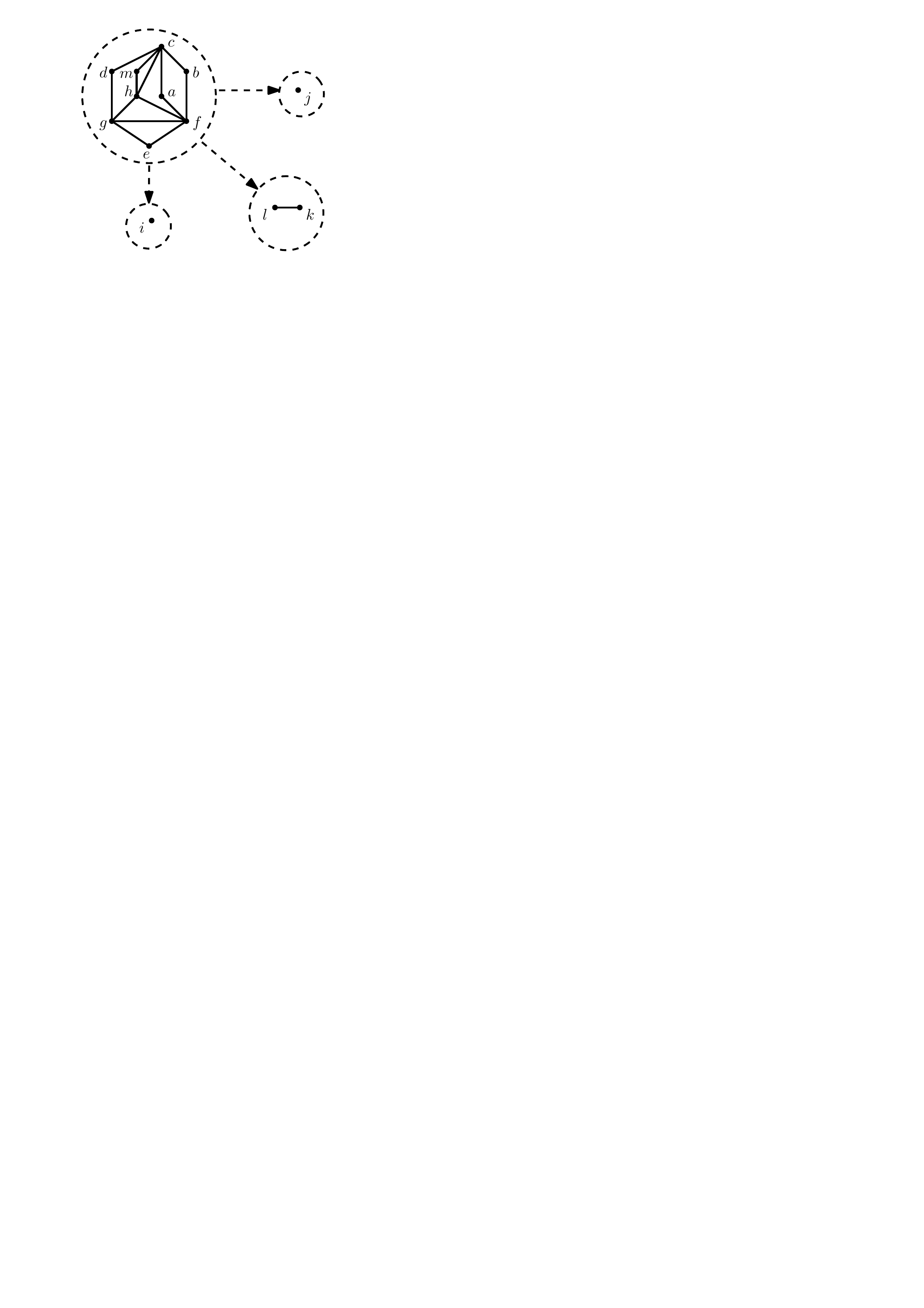}
 \caption{The graph~$\cK(\cT)$ in solid lines and the directed graph~$\Omega(\cT)$ in dashed lines, for the set~$\cT$ of trinets in Figure~\ref{fig:exnets}.}
 \label{fig:K-Omega}
\end{figure}

\begin{lemma}\label{lem:high}
Let~$\cT$ be a set of binets and trinets on a set~$X$ of taxa. Let~$N$ be a cycle-rooted binary level-1 network displaying~$\cT$. Then there exists a nonempty strict subset~$U$ of the vertices of~$\Omega(\cT)$ for which there is no arc~$(\pi,\pi')$ with~$\pi'\in U$, $\pi\notin U$ such that the set of leaves that are high in~$N$ equals $\cup_{\pi\in U} V(\pi)$.
\end{lemma}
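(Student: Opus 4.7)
The plan is to set~$U$ equal to the collection of those connected components of~$\cK(\cT)$ whose vertex sets consist of leaves high in~$N$, and then verify the three required properties in turn. By Lemma~\ref{lem:sameheight} every connected component of~$\cK(\cT)$ lies wholly in either the set~$H$ of leaves high in~$N$ or in $X\setminus H$, so $\bigcup_{\pi\in U}V(\pi)=H$ by construction. Because~$N$ is cycle-rooted, its root cycle has at least one non-root non-reticulation vertex whose outgoing cut-arc leads to a pendant sidenetwork; this contributes a high leaf, so $H\neq\emptyset$. Dually, the reticulation of the root cycle has a child which is an ancestor of some low leaf, so $X\setminus H\neq\emptyset$. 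Hence $U$ is a nonempty strict subset of~$V(\Omega(\cT))$.

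It remains to show that no arc $(\pi,\pi')$ of~$\Omega(\cT)$ has $\pi\notin U$ and $\pi'\in U$. Suppose for contradiction such an arc exists. By the definition of~$\Omega(\cT)$, there is then a cycle-rooted trinet or binet $T\in\cT$ with leaves $h\in V(\pi)$ and $\ell\in V(\pi')$ such that $h$ is high in~$T$ and $\ell$ is low in~$T$, while by the choice of~$U$ the leaf $h$ is low in~$N$ and $\ell$ is high in~$N$. The strategy will be to show that the root cycle of~$T$ is the image of the root cycle of~$N$, which forces the notions of high and low in~$T$ to coincide with those in~$N$ on $\cL(T)$ and so yields a contradiction with $h$.

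The key structural step is an analysis of the lowest stable ancestor of~$\cL(T)$ in~$N$. Write $r_N$ for the reticulation of the root cycle of~$N$; every path from $\rho(N)$ to~$h$ traverses~$r_N$, while some path from~$\rho(N)$ to~$\ell$ avoids~$r_N$. A short case check shows that the only common stable ancestor of $\{h,\ell\}$ in~$N$ is~$\rho(N)$: a non-root vertex on one side of the root cycle is bypassed by the path to~$h$ that goes via the opposite side; $r_N$ itself is bypassed by the high path to~$\ell$; and any vertex strictly below~$r_N$ fails to be an ancestor of~$\ell$. Consequently $LSA(\cL(T))=\rho(N)$, both sides of the root cycle of~$N$ survive in $T=N|\cL(T)$ (each carries a path from $\rho(N)$ to~$h$ via~$r_N$), and the root cycle of~$T$ is obtained from that of~$N$ by suppressing those side vertices whose pendant sidenetworks contain no leaves of~$\cL(T)$, with reticulation the image of~$r_N$. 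A leaf of~$\cL(T)$ is therefore high in~$T$ precisely when it is high in~$N$, so $h$ being high in~$T$ contradicts $h$ being low in~$N$.

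I expect the main obstacle to be exactly this structural step: cleanly confirming that whenever $\cL(T)$ contains both a leaf high in~$N$ and a leaf low in~$N$, the restriction $N|\cL(T)$ inherits its root cycle, and hence its high/low classification, from the root cycle of~$N$. Once this correspondence is in place, the rest of the argument is a direct contradiction.
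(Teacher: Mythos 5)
Your proof is correct and follows essentially the same route as the paper: take $U$ to be the components of $\cK(\cT)$ lying in $H$ (via Lemma~\ref{lem:sameheight}) and rule out an incoming arc by showing that a cycle-rooted $T\in\cT$ cannot have a leaf high in $T$ but low in $N$ together with a leaf low in $T$ but high in $N$. Your LSA/root-cycle analysis is simply a detailed justification of the step the paper states in one sentence, so there is no substantive difference.
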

\begin{proof}
Let~$H$ be the set of leaves that are high in~$N$. Note that~$H\neq\emptyset$. By Lemma~\ref{lem:sameheight},~$H$ is the union of connected components of~$\cK(\cT)$ and hence the union of a set~$U$ of vertices of~$\Omega(\cT)$ that respresent those components. We need to show that there is no arc $(\pi,\pi')$ with~$\pi'\in U$, $\pi\notin U$ in~$\Omega(\cT)$. To see this, notice that if there were such an arc, there would be a trinet or binet~$T\in\cT$ that is cycle-rooted and has leaves~$h,\ell\in \cL(T)$ with~$h\in V(\pi)$ high in~$T$ and~$\ell\in V(\pi')$ low in~$T$. However, such a trinet can only be displayed by~$N$ if either~$h$ is high in~$N$ and~$\ell$ is low in~$N$ or~$h$ and~$\ell$ are at the same height in~$N$. This leads to a contradiction because~$h\in V(\pi)$ with~$\pi\notin U$ and~$\ell\in V(\pi')$ with $\pi'\in U$ and, hence,~$h$ is low in~$N$ and~$\ell$ is high in~$N$.
\end{proof}

We now distinguish two cases. The first case is that the root of the network is in a cycle with size at least four, i.e., the network is largish-cycle rooted. The second case is that the root of the network is in a cycle with size three, i.e., that the network is tiny-cycle rooted. To construct a network from a given set of binets and trinets, the algorithm explores both options.

\subsubsection{The network is largish-cycle rooted\label{subsec:threearcs}} 

\leo{In this case,} we can simply try all subsets of vertices of~$\Omega(\cT)$ with no incoming arcs (i.e. arcs that begin outside and end inside the subset). For at least one such set~$U$ will hold that $\bigcup_{\pi\in U} V(\pi)$ is the set of leaves that are high in the network by Lemma~\ref{lem:high}. 

A set~$\cT$ of binets and trinets on a set~$X$ of taxa is called \emph{semi-dense} if for each pair of taxa from~$X$ there is at least one binet or trinet that contains both of them. If~$\cT$ is semi-dense, then we can identify the set of high leaves by the following lemma.

\begin{lemma}\label{lem:scc}
Let~$\cT$ be a semi-dense set of binets and trinets on a set~$X$ of taxa. Let~$N$ be a binary largish-cycle rooted level-1 network displaying~$\cT$. Let~$H$ be the set of leaves that are high in~$N$. Then \leooo{there is a the unique indegree-0 vertex~$\pi_0$ of~$\Omega(\cT)$ and~$H=V(\pi_0)$}.
\end{lemma}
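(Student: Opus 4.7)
The plan is to establish the lemma in four steps: (i) all high leaves of~$N$ lie in a single connected component of~$\cK(\cT)$; (ii) this component~$\pi_0$ satisfies $V(\pi_0)=H$; (iii) $\pi_0$ has indegree~$0$ in $\Omega(\cT)$; and (iv) no other vertex of $\Omega(\cT)$ can be indegree-$0$.

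For step~(i), I would pick any two leaves $a,b\in H$ and produce a path between them in~$\cK(\cT)$. Since~$N$ is largish-cycle rooted, its root cycle has at least two intermediate (non-reticulation) vertices, and hence at least two pendant sidenetworks, which together contain all the high leaves. If~$a$ and~$b$ lie in different pendant sidenetworks, I would apply semi-density to obtain some $T\in\cT$ containing both, and examine $T=N|\cL(T)$ by cases: if $T$ is a binet, then the restriction collapses to the cherry $T(a,b)$; if $T$ is a trinet with third leaf~$c$, then either $c$ is high in~$N$ (so no low leaves appear in $\cL(T)$, destroying the root cycle and making $T$ a tree in which all leaves are high by definition) or $c$ is low (so the root cycle survives, leaving $a,b$ on its sides). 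In every sub-case $a$ and~$b$ are at the same height in~$T$, so $\{a,b\}$ is an edge of~$\cK(\cT)$. If instead $a,b$ lie in the same pendant sidenetwork, I would pick any leaf~$c$ in a different pendant (guaranteed by the largish-cycle hypothesis) and apply the previous sub-case twice to obtain edges $\{a,c\}$ and~$\{b,c\}$ in~$\cK(\cT)$, producing the desired path $a$--$c$--$b$.

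For step~(ii), Lemma~\ref{lem:high} gives that~$H$ is a union of connected components of~$\cK(\cT)$; combined with $H\subseteq V(\pi_0)$ from step~(i) and the disjointness of components, this forces $H=V(\pi_0)$. For step~(iii), applying Lemma~\ref{lem:high} with the singleton $U=\{\pi_0\}$ shows no arc of $\Omega(\cT)$ enters $\pi_0$ from outside~$U$, so $\pi_0$ is indegree-$0$ in~$\Omega(\cT)$.

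For step~(iv), I would argue by contradiction: suppose some $\pi'\neq\pi_0$ has indegree~$0$, pick $b\in V(\pi')\subseteq X\setminus H$ (so $b$ is low in~$N$) and any $a\in H$, and use semi-density to find $T\in\cT$ containing both. Because~$b$ is reachable from~$\rho(N)$ via two paths through both sides of the root cycle (one through each child of~$\rho(N)$, meeting at the reticulation), the lowest stable ancestor of $\cL(T)$ in~$N$ is~$\rho(N)$, and the root cycle of~$N$ is preserved in $T=N|\cL(T)$. Hence $T$ is cycle-rooted with~$a$ high and~$b$ low in~$T$, yielding the arc $(\pi_0,\pi')$ in~$\Omega(\cT)$ and contradicting the assumption that~$\pi'$ is indegree-$0$. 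The main obstacle is step~(i): verifying across all possible $T$ provided by semi-density (varying the third leaf and the pendant configurations of the two high leaves) that the restriction $N|\cL(T)$ always places~$a$ and~$b$ at the same height requires a thorough case analysis, and this is where the largish-cycle hypothesis is crucial, both to guarantee a second pendant for the bridge argument and to ensure the relevant restrictions collapse in the desired way.
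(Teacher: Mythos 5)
Your proof is correct and follows essentially the same route as the paper's: semi-density plus the existence of at least two pendant sidenetworks (from the largish-cycle hypothesis) forces all high leaves into one component of~$\cK(\cT)$, Lemma~\ref{lem:high}/Lemma~\ref{lem:sameheight} then gives $H=V(\pi_0)$ with indegree~$0$, and semi-density again gives an arc from~$\pi_0$ to every other vertex, yielding uniqueness. Your treatment is somewhat more detailed in the case analysis of the restrictions $N|\cL(T)$, but the structure and key ideas coincide with the paper's argument.
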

\begin{proof}
\cel{
Since $\cT$ is semi-dense, for any two leaves~$h,h'\in H$ that are below different cut-arcs leaving the cycle~$C$ containing the root of~$N$, there exists a binet or a trinet $T$ in $\cT$ containing both $h$ an $h'$. Then, since $T$ is displayed by $N$, $T$ has to be a  binet or a trinet where  $h$ an $h'$ are at the same height. This implies that
%It is easy to see that, for any two leaves~$h,h'\in H$ that are below different cut-arcs leaving the root of~$N$,  
there is an edge~$\{h,h'\}$ in~$\cK(\cT)$. Then, since there exist at least two different cut-arcs leaving~$C$,} the leaves in~$H$ are all in the same connected component of~$\cK(\cT)$. Then, by Lemma~\ref{lem:sameheight},~$H$ forms a connected component of~$\cK(\cT)$. Hence,~$H$ is a vertex of~$\Omega(\cT)$. This vertex has indegree-0 because no trinet or binet~$T$ displayed by~$N$ has a leaf~$\ell\notin H$ that is high in~$T$ and a leaf~$h\in H$ that is low in~$T$. Therefore,~$H$ is an indegree-0 vertex of~$\Omega(\cT)$. Moreover, by construction, there is an arc from~$H$ to each other vertex of~$\Omega(\cT)$. Hence,~$H$ is the unique indegree-0 vertex of~$\Omega(\cT)$.
\end{proof}

\subsubsection{The network is tiny-cycle rooted\label{subsec:twoarcs}} 
For this case, we define a modified graph~$\cK^\dagger(\cT)$, which is the graph obtained from~$\cK(\cT)$ as follows. For each pair of leaves~$a,b\in X$, we add an edge~$\{a,b\}$ if there is no such edge yet and there exists a largish-cycle rooted trinet~$T\in\cT$ with~$a,b\in \cL(T)$ (i.e.~$T$ is of type~$S_1(x,y;z)$ or~$S_2(x;y;z)$).

The directed graph~$\Omega^\dagger(\cT)$ is defined in a similar way as $\Omega(\cT)$ but its vertex set is the set of connected components of~$\cK^\dagger(\cT)$. Its arc set has, as in $\Omega(\cT)$, an arc~$(\pi,\pi')$ if there is a binet or trinet~$T\in\cT$ that is cycle-rooted and has leaves~$h,\ell\in \cL(T)$ with~$h\in V(\pi)$ high in~$T$,~$\ell\in V(\pi')$ low in~$T$.

\begin{figure}
 \centering
  \includegraphics{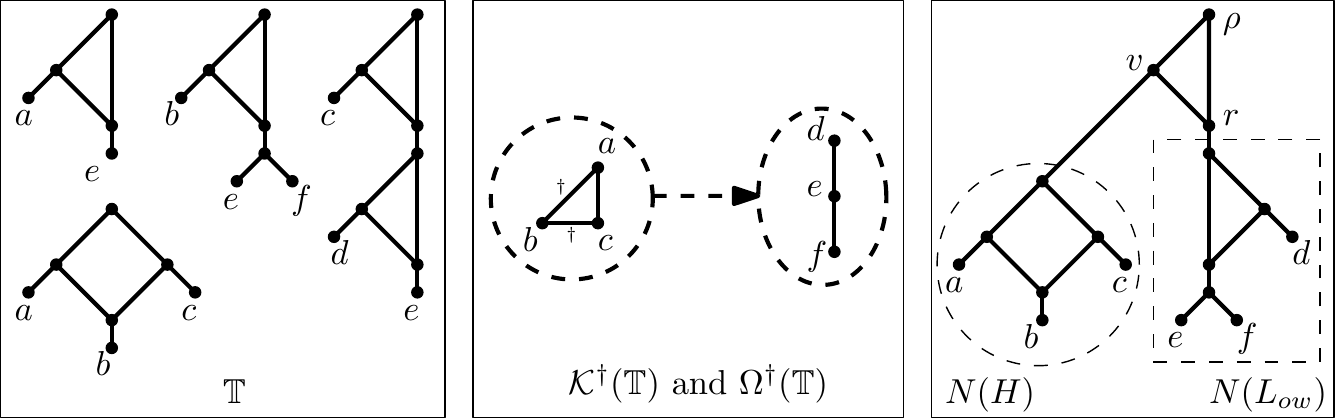}
 \caption{Example for the case that the network is tiny-cycle rooted. From left to right are depicted a set of trinets~$\cT$, its graphs $\cK^\dagger(\cT)$ (solid) and $\Omega^\dagger(\cT)$ (dashed) and the resulting network~$N$, obtained by combining networks~$N(H)$ and~$N(\low)$. Note that the two edges labelled~$\dagger$ are in $\cK^\dagger(\cT)$ but not in $\cK(\cT)$. }
 \label{fig:two-outgoing}
\end{figure}

Our approach for this case is to \cel{take} a non-empty strict subset~$U$ of the vertices of~$\Omega^\dagger(\cT)$ that has no incoming arcs and to take~$H$ to be the union of the elements of~$U$. Then, a network displaying~$\cT$ can be constructed by combining a network~$N(H)$ displaying~$\cT|H$ and a network~$N(\low)$ displaying~$\cT|\low$, with~$\low=X\setminus H$. (An example is depicted in Figure~\ref{fig:two-outgoing}). The next lemma shows how to construct such a network.

\begin{lemma}\label{lem:scc2}
Let~$\cT$ be a set of binets and trinets on a set~$X$ of taxa. Let~$N$ be a binary tiny-cycle rooted level-1 network displaying~$\cT$. Then there is a non-empty strict subset~$U$ of the vertices of~$\Omega^\dagger(\cT)$ such that there is no arc~$(\pi,\pi')$ with~$\pi'\in U$, $\pi\notin U$. Moreover, if~$U$ is any such set of vertices, then there exists a binary tiny-cycle rooted level-1 network~$N'$ displaying~$\cT$ in which~$\bigcup_{\pi\in U} V(\pi)$ is the set of leaves that are high in~$N'$.
\end{lemma}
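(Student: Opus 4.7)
My plan is to prove the two claims separately. For the existence of $U$, let $H$ be the set of leaves high in $N$; since $N$ is cycle-rooted, both $H$ and $L := X \setminus H$ are nonempty. The key step is to argue that every edge of $\cK^\dagger(\cT)$ joins two leaves at the same height in $N$, so that every connected component of $\cK^\dagger(\cT)$ lies entirely in $H$ or entirely in $L$. For edges inherited from $\cK(\cT)$ this is immediate from Lemma~\ref{lem:sameheight}, while for the extra edges $\{a,b\}$ arising from a largish-cycle rooted trinet $T\in\cT$, one uses that $T$'s root cycle has size at least four and so cannot correspond to $N$'s (tiny) root cycle: the lowest stable ancestor of $\cL(T)$ in $N$ must therefore lie strictly below the root cycle of $N$, forcing $\cL(T)$ to be entirely in $H$ or entirely in $L$. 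Taking $U$ to be the set of components of $\cK^\dagger(\cT)$ contained in $H$ then yields a nonempty strict subset of the vertices of $\Omega^\dagger(\cT)$. The no-incoming-arc property is verified by contradiction: an incoming arc $(\pi,\pi')$ with $\pi\notin U$, $\pi'\in U$ would come from a cycle-rooted $T\in\cT$ with some $h\in V(\pi)\subseteq L$ high in $T$ and some $\ell\in V(\pi')\subseteq H$ low in $T$; but because $h$ and $\ell$ straddle $N$'s root cycle, the lowest stable ancestor of $\{h,\ell\}$ (and hence of $\cL(T)$) in $N$ is $\rho(N)$, so the heights of $h$ and $\ell$ in $N|\cL(T)=T$ must match their heights in $N$, a contradiction.

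For the second claim, given any valid $U$, set $H' := \bigcup_{\pi\in U} V(\pi)$ and $L' := X\setminus H'$. By Observation~\ref{obs:subset}, $N|H'$ and $N|L'$ are binary level-1 networks displaying $\cT|H'$ and $\cT|L'$, respectively. Build $N'$ by creating a tiny cycle with vertices $\rho',v',r'$ and arcs $(\rho',v'),(\rho',r'),(v',r')$, attaching $N|H'$ below $v'$ as a pendant sidenetwork and $N|L'$ below $r'$. Then $N'$ is a binary tiny-cycle rooted level-1 network whose set of high leaves is exactly $H'$. It remains to verify that $N'$ displays every $T\in\cT$. When $\cL(T)\subseteq H'$ or $\cL(T)\subseteq L'$ this is immediate because $N|H'$ and $N|L'$ respectively display $T$, so the interesting case is when $\cL(T)$ straddles $H'$ and $L'$.

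I expect this straddling case to be the main obstacle. The idea is to first exclude tree trinets (in a tree all leaves are high, so every pair of leaves in $\cL(T)$ would give a $\cK(\cT)$-edge, contradicting that a straddling pair must lie in different components of $\cK^\dagger(\cT)$) and largish-cycle rooted trinets (by the very definition of $\cK^\dagger(\cT)$, all pairs from the leaf set of such a trinet receive edges). This forces $T$ to be a cycle-rooted binet or a tiny-cycle rooted trinet. A short case analysis on where the $T$-high and $T$-low leaves of $T$ lie with respect to the bipartition $\{H',L'\}$, using the absence of $\cK^\dagger(\cT)$-edges across this bipartition together with the no-incoming-arc property of $U$, shows that the $T$-high leaves must lie in $H'$ and the $T$-low leaves in $L'$. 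Consequently $N'|\cL(T)$ consists of the tiny root cycle with $(N|H')|(\cL(T)\cap H')=T|(\cL(T)\cap H')$ as pendant sidenetwork and $(N|L')|(\cL(T)\cap L')=T|(\cL(T)\cap L')$ below the reticulation, which matches $T$ exactly.
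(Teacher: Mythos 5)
Your proof is correct and takes essentially the same route as the paper's: both show that the high-leaf set of $N$ is a union of $\cK^\dagger(\cT)$-components forming a set with no incoming $\Omega^\dagger$-arcs, and both build $N'$ by gluing networks for $H'$ and $X\setminus H'$ onto a tiny cycle, verifying the straddling binets and trinets via the absence of cross-edges in $\cK^\dagger(\cT)$ and the no-incoming-arc property. The only differences are cosmetic (you argue via lowest stable ancestors and a uniform ``high set, low set, and restricted binets determine a tiny-cycle rooted trinet'' observation, where the paper enumerates the trinet types; also your ``tree trinets'' should read ``non-cycle-rooted trinets'' so as to cover $N_3$, but your argument that all their leaves are high applies verbatim).
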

\begin{proof}
Let~$H$ denote the set of leaves that are high in~$N$. Then, $H$ is the union of the vertex sets of one or more connected components of~$\cK(\cT)$ by Lemma~\ref{lem:sameheight}. Any largish-cycle rooted trinet which contains a leaf in~$H$ and a leaf not in~$H$ cannot be displayed by~$N$ because~$N$ is tiny-cycle rooted. Hence,~$H$ is also the union of one or more connected components of~$\cK^\dagger(\cT)$. These components form a subset~$U$ of the vertices of~$\Omega^\dagger(\cT)$. Furthermore, there is no arc~$(\pi,\pi')$ with~$\pi'\in U$, $\pi\notin U$ since no trinet or binet~$T$ displayed by~$N$ has a leaf that is in~$H$ and high in~$T$ and a leaf that is not in~$H$ and that is low in~$T$.
%If~$U$ denotes this set of vertices then~$U$ is a nonempty strict subset of the vertices of~$\Omega^\dagger(\cT)$ with no incoming arcs. 

Now consider any nonempty strict subset~$U'$ of the vertices of~$\Omega^\dagger(\cT)$ with no incoming arcs, let~$H'$ be the union of the vertex sets of the corresponding connected components of $\cK^\dagger(\cT)$ and let $\low'=X\setminus H'$. Let~$N(H')$ be a binary level-1 network displaying~$\cT|H'$ and let~$N(\low')$ be a binary level-1 network displaying~$\cT|\low'$. Such networks exist by Observation~\ref{obs:subset}. Let~$N'$ be the network consisting of vertices~$\rho,v,r$, arcs~$(\rho,v)$, $(v,r)$, $(\rho,r)$, networks~$N(\low'), N(H')$ and an arc from~$v$ to the root of~$N(H')$ and an arc from~$r$ to the root of~$N(\low')$. Clearly, $N'$ is a tiny-cycle rooted level-1 network and~$H'$ is the set of leaves that are high in~$N'$.

It remains to prove that~$N'$ displays~$\cT$. First observe that for any~$h\in H'$ and~$\ell\in X\setminus H'$, there is no edge~$\{h,\ell\}$ in~$\cK^\dagger(\cT)$ (because otherwise~$h$ and~$\ell$ would lie in the same connected component). Hence, by construction of $\cK^\dagger(\cT)$, any binet or trinet containing~$h$ and~$\ell$ can not be tiny-cycle rooted and cannot have \cel{$h$ and~$\ell$ at the same height}. Moreover, in any such binet or trinet,~$h$ must be high and~$\ell$ must be low, because otherwise there would be an arc entering~$U'$ in~$\Omega^\dagger(\cT)$.

Consider any trinet or binet~$T\in\cT$. If the leaves of~$T$ are all in~$H'$ or all in~$L'$ then~$T\in\cT|H'$ or~$T\in\cT|L'$ and so~$T$ is clearly displayed by~$N'$. If~$T$ is a binet containing one leaf~$h\in H'$ and one leaf~$\ell\in X\setminus H'$, then~$T$ must be~$N(\ell;h)$ (by the previous paragraph) and, again,~$T$ is clearly displayed by~$N'$. Now suppose that~$T$ contains one leaf~$h\in H'$ and two leaves~$\ell,\ell'\in X\setminus H'$. \cel{Since we have argued in the previous paragraph that~$T$ is tiny-cycle rooted, $T$ must be of the form~$N_2(\ell,\ell';h)$, $N_5(\ell;\ell';h)$ or $N_5(\ell';\ell;h)$}. Moreover, since~$N(\low')$ displays the binet on~$\ell$ and~$\ell'$, and since~$h$ is high in~$T$ and~$\ell,\ell'$ low, it again follows that~$T$ is displayed by~$N'$. Finally, assume that~$T$ contains two leaves $h,h'\in H'$ and a single leaf~$\ell\in X\setminus H'$. Then (since~$T$ is tiny-cycle rooted)~$T$ must be of the form~$N_1(h,h';\ell)$ or $N_4(h;h';\ell)$. Since~$N(H')$ displays the binet on~$h$ and~$h'$, it follows that~$N'$ again displays~$T$.
\end{proof}

Note that the proof of Lemma~\ref{lem:scc2} describes how to build a tiny-cycle rooted level-1 network displaying~$\cT$ if such a network exists. Therefore, we assume from now on that the to be constructed network is largish-cycle rooted.

\subsection{Separating the left and the right leaves}\label{subsec:leftright} 

The next step is to divide the set~$H$ of leaves that are high in~$N$ into the leaves that are ``on the left'' and the leaves that are ``on the right'' of the cycle containing the root or, more precisely, to find the bipartition of~$H$ induced by some network displaying a given set of binets and trinets. We use the following definition.

\begin{definition}\label{def:feasible}
A bipartition of some set~$H\subseteq X$ is called \emph{feasible} with respect to a set of binets and trinets~$\cT$ if the following holds: 
\begin{enumerate}
\item[(F1)] if there is a binet or trinet~$T\in\cT$ containing leaves~$a,b\in H$ that has a common ancestor in~$T$ that is not the root of~$T$, then~$a$ and~$b$ are in the same part of the bipartition and
\item[(F2)] if there is a trinet~$S_1(x,y;z)\in\cT$ with~$x,y\in H$ and~$z\in X\setminus H$, then~$x$ and~$y$ are in different parts of the bipartition.
\end{enumerate}
\end{definition}

\leooo{Note that one part of a feasible bipartition may be empty.}

\begin{lemma}\label{lem:feasible}
Let~$N$ be a cycle-rooted binary level-1 network, let~$\cT\subseteq\cT(N)$, let~$H$ be the set of leaves that are high in~$N$ and let~$\{L,R\}$ be the bipartition of~$H$ induced by~$N$. Then $\{L,R\}$ is feasible with respect to~$\cT$.
\end{lemma}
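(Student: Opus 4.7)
The plan is to verify conditions (F1) and (F2) separately, in both cases exploiting that $N$ displaying $T$ means $N|\cL(T)$ is equivalent to $T$, so common ancestors of leaves in $T$ correspond exactly to common ancestors in $N|\cL(T)$. The two key facts I will use repeatedly are that restriction and suppression of degree-2 vertices never introduce new ancestor relations, and that if $v$ lies on every path from $\rho(N)$ to every leaf in a set $Y$, then $v$ is stable for $Y$.

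For (F1), I argue the contrapositive. Suppose $a\in L$ and $b\in R$ lie on different sides of the root cycle $C$, with $v_L,v_R$ the children of $\rho(N)$ on $C$ and $r$ the reticulation of $C$. Since both leaves are high, every path from $\rho(N)$ to $a$ passes through $v_L$ and avoids $r$, and symmetrically for $b$; moreover, no leaf reachable from $r$ is high, so $r$ is not an ancestor of $a$ or $b$. A short case analysis then shows that no vertex strictly below $\rho(N)$ is a common ancestor of both $a$ and $b$ in $N$. Consequently $LSA(\cL(T))=\rho(N)$ for any $\cL(T)\supseteq\{a,b\}$, so $\rho(N)$ is the root of $N|\cL(T)$ and the unique common ancestor of $a,b$ there, contradicting the hypothesis that $a,b$ share a common ancestor in $T$ other than $\rho(T)$.

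For (F2), assume for contradiction that $T=S_1(x,y;z)\in\cT$ with $x,y\in H$ in the same part, say $L$, and $z\in X\setminus H$. Let $u_1,\dots,u_{k_L}$ be the internal cycle vertices on side $S_L$ ordered from $\rho(N)$ toward $r$, and let $u_{i_x},u_{i_y}$ be those whose pendant sidenetworks contain $x$ and $y$, respectively; without loss of generality $i_x\leq i_y$. Set $u:=u_{i_x}$. Every path from $\rho(N)$ to $y$ traverses $u_{i_x}$ before $u_{i_y}$ and $x$ is reached via the cut-arc at $u_{i_x}$, so $u$ is a common ancestor of $x,y$ in $N$ distinct from $\rho(N)$. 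Since $z$ is low, neither $v_L$ nor $v_R$ is stable for $z$, hence $LSA(\{x,y,z\})=\rho(N)$ and $u$ lies on paths preserved in $N|\{x,y,z\}$.

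The main technical obstacle is verifying that $u$ survives the suppression step. In the restriction, $u$ retains its cut-arc into the sidenetwork containing $x$ (needed to reach $x$) and its cycle edge toward the next vertex on $C$, which is needed to reach $r$ and hence $z$ (and also $y$, if $i_x<i_y$). So $u$ has outdegree $2$, is not suppressed, and remains a common ancestor of $x,y$ in $N|\{x,y,z\}$ other than the root. But in $S_1(x,y;z)$ the two high leaves are attached by cut-arcs to distinct vertices of the size-4 cycle and therefore share no common ancestor below $\rho(T)$ — contradicting that $N|\{x,y,z\}$ is equivalent to $T$.
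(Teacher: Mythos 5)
Your proof is correct and follows essentially the same route as the paper's: both parts reduce to the observation that a common ancestor of two high leaves other than the root exists in $T=N|\cL(T)$ exactly when the two leaves are on the same side of the root cycle of~$N$, which for (F1) forces same-part and for (F2), via the structure of $S_1(x,y;z)$, forces different parts. The paper simply asserts these transfer facts, whereas you verify them explicitly (tracking the $LSA$ and the survival of the witnessing vertex under suppression); the content is the same.
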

\begin{proof}
First consider a binet or trinet~$T\in\cT$ containing leaves~$a,b\in H$ that have a common ancestor in~$T$ that is not the root of~$T$. Since~$N$ displays~$T$, it follows that~$a$ and~$b$ have a common ancestor in~$N$ that is not the root of~$N$. Hence,~$a$ and~$b$ are on the same side in~$N$. Since~$\{L,R\}$ is the bipartition of~$H$ induced by~$N$, it now follows that~$a$ and~$b$ are in the same part of the bipartition, as required.

Now consider a trinet~$S_1(x,y;z)\in\cT$ with~$x,y\in H$ and~$z\in X\setminus H$. Since~$N$ displays~$T$, we have that~$x$ and~$y$ are not on the same side in~$N$. Since~$\{L,R\}$ is the bipartition of~$H$ induced by~$N$, it follows that~$x$ and~$y$ are contained in different parts of the bipartition, as required.
\end{proof}

We now show how a feasible bipartition of a set~$H\subseteq X$ can be found in polynomial time. We define a graph~$\cM(\cT,H)=(H,E(\cM))$ with an edge~$\{a,b\}\in E(\cM)$ if there is a trinet or binet~$T\in\cT$ with~$a,b\in \cL(T)$ distinct and in which there is a common ancestor of~$a$ and~$b$ that is not the root of~$T$. The idea behind this graph is that leaves that are in the same connected component of this graph have to be in the same part of the bipartition.

Now define a graph~$\cW(\cT,H)=(V(\cW),E(\cW))$ as follows. The vertex set~$V(\cW)$ is the set of connected components of~$\cM(\cT,H)$ and there is an edge~$\{\pi,\pi'\}\in E(\cW)$ precisely if there exists a trinet~$S_1(x,y;z)\in\cT$ with~$x\in V(\pi)$, $y\in V(\pi')$ and~$z\in X\setminus H$. The purpose of this graph is to ensure that groups of leaves are in different parts of the bipartition, whenever this is necessary. See Figure~\ref{fig:M-W} for an example.

\begin{figure}
 \centering
  \includegraphics{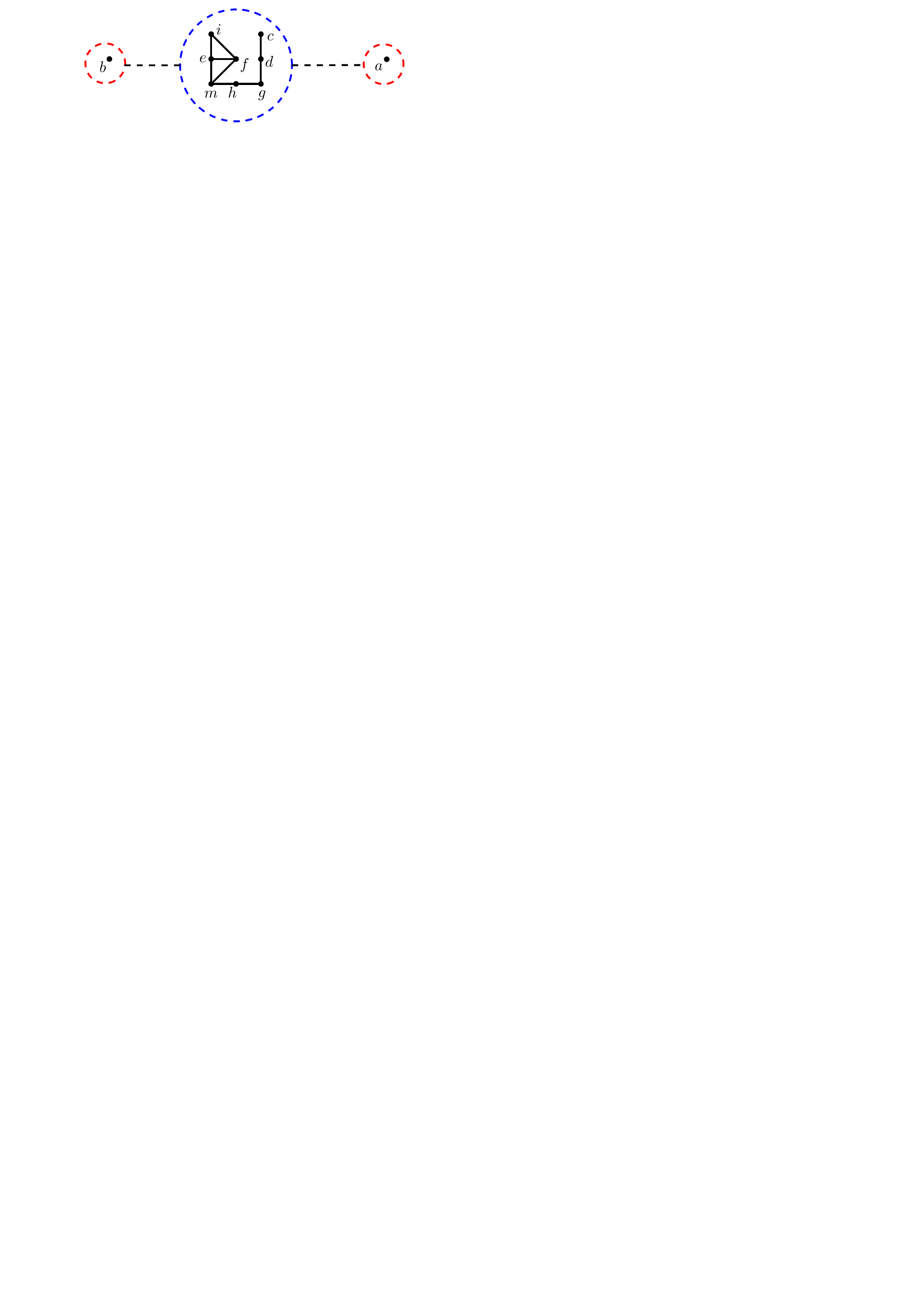}
 \caption{The graph~$\cM(\cT,H)$ in solid lines and the graph~$\cW(\cT,H)$ in dashed lines, for the set~$\cT$ of trinets in Figure~\ref{fig:exnets} and~$H=\{a,b,c,d,e,f,g,h,i,m\}$. A proper 2-colouring of~$\cW(\cT,H)$ is to color $\{a\}$ and $\{b\}$ in red and $\{c,d,e,f,g,h,i,m\}$ in blue.}
 \label{fig:M-W}
\end{figure}

\begin{lemma}\label{lem:2colouring}
Let~$\cT$ be a set of binets and trinets on~$X$ and~$H\subseteq X$. A bipartition~$\{L,R\}$ of~$H$ is feasible with respect to~$\cT$ if and only if
\begin{enumerate}
\item[(I)] $V(\pi)\subseteq L$ or~$V(\pi)\subseteq R$ for all~$\pi\in V(\cW)$ and
\item[(II)] there does not exist~$\{\pi,\pi'\}\in E(\cW)$ with~$(V(\pi)\cup V(\pi'))\subseteq L$ or~$(V(\pi)\cup V(\pi'))\subseteq R$.
\end{enumerate}
\end{lemma}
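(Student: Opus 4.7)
The plan is to translate the feasibility conditions (F1) and (F2) directly into the graph-theoretic statements (I) and (II) on $\cM(\cT,H)$ and $\cW(\cT,H)$. The two directions are essentially bookkeeping, so I would write the proof as two short paragraphs, one for each implication.

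For the forward direction, suppose $\{L,R\}$ is feasible. To establish (I), I would take any connected component $\pi$ of $\cM(\cT,H)$ and any two leaves $a,b\in V(\pi)$, together with a path $a=v_0,v_1,\ldots,v_k=b$ in $\cM(\cT,H)$. By the definition of $\cM(\cT,H)$, each edge $\{v_i,v_{i+1}\}$ witnesses a binet or trinet in $\cT$ in which $v_i,v_{i+1}\in H$ share a common ancestor that is not the root, so (F1) forces $v_i$ and $v_{i+1}$ into the same part; induction on $k$ then places $a$ and $b$ in the same part, giving $V(\pi)\subseteq L$ or $V(\pi)\subseteq R$. To check (II), if $\{\pi,\pi'\}\in E(\cW)$, the definition of $\cW(\cT,H)$ supplies a trinet $S_1(x,y;z)\in\cT$ with $x\in V(\pi)$, $y\in V(\pi')$ and $z\in X\setminus H$; (F2) puts $x$ and $y$ in different parts, and by (I) the whole of $V(\pi)$ and the whole of $V(\pi')$ lie in single (and necessarily distinct) parts, so $V(\pi)\cup V(\pi')$ is not contained in $L$ or in $R$.

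For the backward direction, assume that (I) and (II) hold. To verify (F1), take a binet or trinet $T\in\cT$ with leaves $a,b\in H$ sharing a common ancestor in $T$ that is not the root of $T$. Then $\{a,b\}\in E(\cM)$, so $a$ and $b$ lie in the same connected component of $\cM(\cT,H)$, and (I) forces them into the same part. To verify (F2), consider $S_1(x,y;z)\in\cT$ with $x,y\in H$ and $z\in X\setminus H$, and let $\pi,\pi'$ be the components of $\cM(\cT,H)$ containing $x$ and $y$ respectively. The trinet $S_1(x,y;z)$ witnesses an edge $\{\pi,\pi'\}\in E(\cW)$, and combining (II) with (I) forces $V(\pi)$ and $V(\pi')$ to lie in opposite parts of the bipartition; hence $x$ and $y$ lie in different parts, as required.

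The only subtlety I anticipate is the degenerate case of the backward direction in which $x$ and $y$ happen to fall in the same connected component $\pi=\pi'$ of $\cM(\cT,H)$. Under the natural reading of $\cW(\cT,H)$ this produces a self-loop at $\pi$, and (II) applied with $\pi=\pi'$ rules this out (since $V(\pi)\cup V(\pi)=V(\pi)$ is in one part by (I)); equivalently, in that situation (F1) via the chain in $\cM(\cT,H)$ and (F2) applied to $S_1(x,y;z)$ are directly incompatible, so no feasible bipartition can exist at all and both sides of the equivalence fail. Handling this edge case cleanly is the one point where I would be careful; all other steps are routine translations between the combinatorial and graph-theoretic formulations.
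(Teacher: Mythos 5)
Your proof is correct and follows the same route as the paper, which simply asserts that (F1) is equivalent to (I) and (F2) to (II); you supply the routine bookkeeping (the path argument in $\cM(\cT,H)$ for (I), and the combination of (I) with (II) for (F2)) that the paper leaves implicit. Your handling of the degenerate case $\pi=\pi'$ (a loop in $\cW$, under which both sides of the equivalence fail for every bipartition) is also correct and is a point the paper does not spell out.
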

\begin{proof}
The lemma follows directly from observing that (F1) holds if and only if (I) holds and that (F2)  holds if and only if (II) holds.
\end{proof}

By Lemma~\ref{lem:2colouring}, all feasible bipartitions can be found by finding all 2-colourings of the graph~$\cW(\cT,H)$.

For example, consider the input set of trinets~$\cT$ from Figure~\ref{fig:exnets}. Since~$\cT$ is not semi-dense, we have to guess which connected components of~$\cK(\cT)$ form the set~$H$ of leaves that are high in the network (see Section~\ref{subsec:high}). If we guess~$H=\{a,b,c,d,e,f,g,h,i,m\}$, then we obtain the graphs $\cM(\cT,H)$ and~$\cW(\cT,H)$ as depicted in Figure~\ref{fig:M-W}. The only possible 2-colouring (up to symmetry) of the graph~$\cW(\cT,H)$ is indicated in the figure. From this we can conclude that~$a$ and~$b$ are on the same side of the network and that all other high leaves ($c,d,e,f,g,h,m$) are ``on the other side'' (i.e., none of them is on the same side as~$a$ or~$b$).

\subsection{Finding the pendant sidenetworks}\label{subsec:sidenetworks}

The next step is to divide the leaves of each part of the bipartition of the set of high leaves of the network into the leaves of the pendant sidenetworks. For this, we define the following graph and digraph.

Let~$\cT$ be a set of binets and trinets on~$X$, let $H\subsetneq X$, let $\{L,R\}$ be some bipartition of~$H$ that is feasible with respect to~$\cT$ and let~$S'\subseteq S\in\{L,R\}$. Consider the graph~$\cO(\cT,S',H)$ with vertex set~$S'$ and an edge~$\{a,b\}$ if
\begin{itemize}
\item there exists a trinet or binet~$T\in\cT|S'$ with~$a,b\in \cL(T)$ that has a cycle that contains the root or a common ancestor of~$a$ and~$b$ (or both) or;
\item there exists a trinet~$T\in\cT$ with~$\cL(T)=\{a,b,c\}$ with~$c\notin H$ and such that~$c$ is low in~$T$ and~$a$ and~$b$ are high in~$T$ and both in the same pendant sidenetwork of~$T$ or;
\item $T_1(a,b;c)\in\cT|S'$ for some~$c\in S'$.  
\end{itemize}

The directed graph~$\cD(\cT,S',H)$ (possibly having loops) has a vertex for each connected component of $\cO(\cT,S',H)$ and it has an arc~$(\pi_1,\pi_2)$ (possibly, $\pi_1=\pi_2$) precisely if there is a trinet in~$\cT$ of the form $S_2(x;y;z)$ with $x\in V(\pi_1)$, $y\in V(\pi_2)$ and $z\notin H$.

For example, Figure~\ref{fig:resnets} shows the set of trinets from Figure~\ref{fig:exnets} restricted to the set~$S=R=\{c,d,e,f,g,h,i,m\}$. The corresponding graphs $\cO(\cT,R,H)$ and $\cD(\cT,R,H)$, with $H=\{a,b,c,d,e,f,g,h,i,m\}$, are depicted in Figure~\ref{fig:O-D}. 

\begin{figure}[H]
  \centerline{\includegraphics{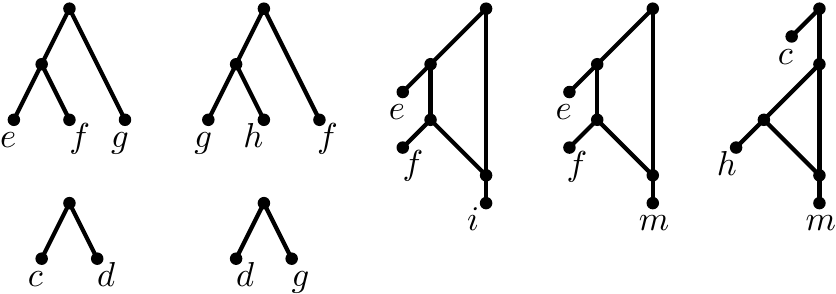}}
  \caption{The restricted set of trinets~$\cT|R$ with $R=\{c,d,e,f,g,h,i,m\}$ and~$\cT$ the set of trinets in Figure~\ref{fig:exnets} and $H=\{a,b,c,d,e,f,g,h,i,m\}$.}
  \label{fig:resnets}
\end{figure}

\begin{figure}[H]
 \centering
  \includegraphics{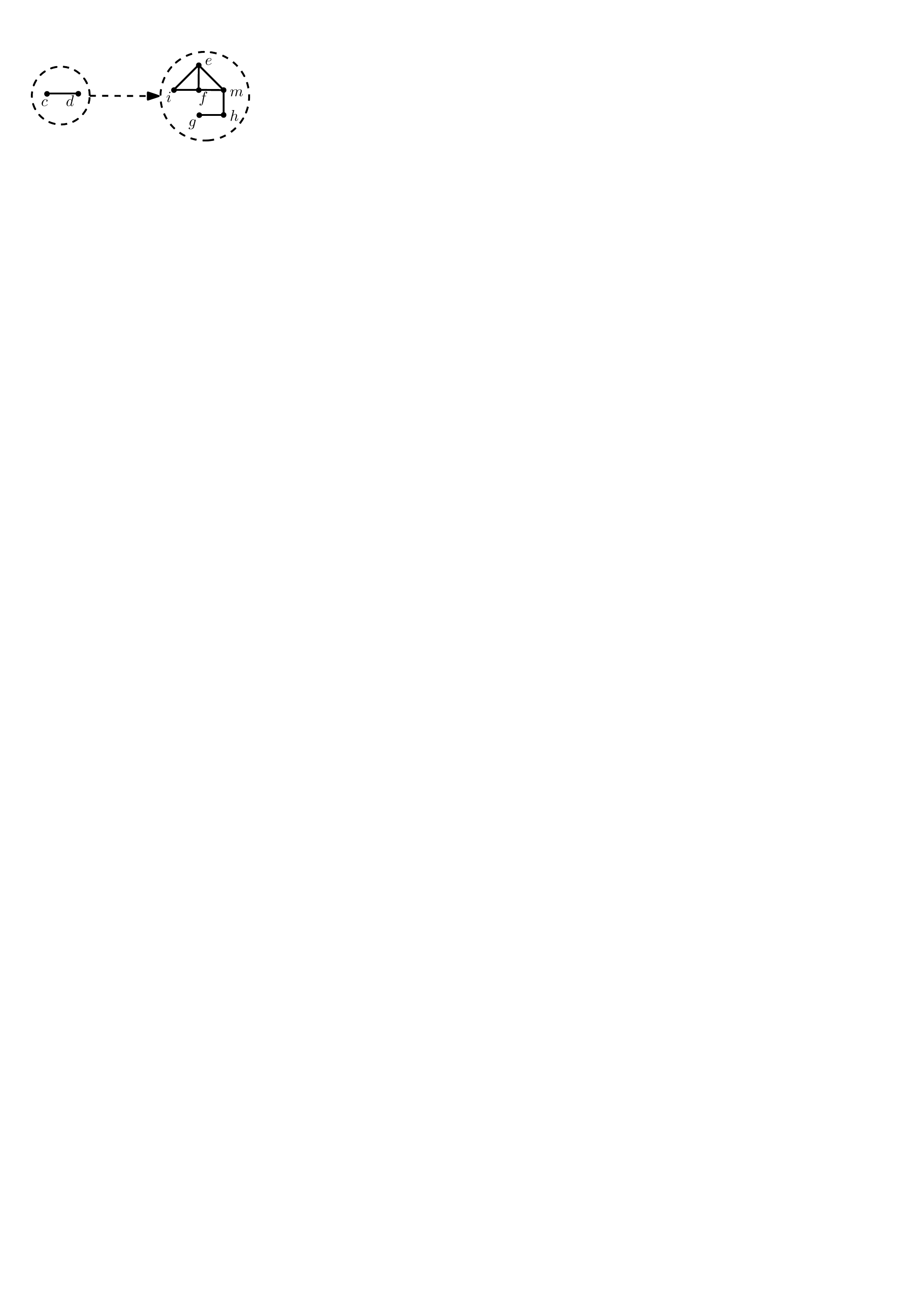}
 \caption{The graph~$\cO(\cT,R,H)$ in solid lines and the digraph~$\cD(\cT,R,H)$ in dashed lines, with $R$ and~$H$ as in Figure~\ref{fig:resnets}.}
 \label{fig:O-D}
\end{figure}

The following lemma shows that, if the digraph~$\cD(\cT,S',H)$ has no indegree-0 vertex, there exists no binary level-1 network displaying~$\cT$ in which~$H$ is the set of high leaves and all leaves in~$S'$ are on the same side.

\begin{lemma}\label{lem:noindeg0}
Let~$\cT$ be a set of binets and trinets on~$X$, let $H\subsetneq X$, let $\{L,R\}$ be a bipartition of~$H$ that is feasible with respect to~$\cT$ and let~$S'\subseteq S\in\{L,R\}$. If the graph $\cD(\cT,S',H)$ has no indegree-0 vertex, then there exists no binary level-1 network~$N$ that displays~$\cT$ in which~$H$ is the set of high leaves and all leaves in~$S'$ are on the same side.
\end{lemma}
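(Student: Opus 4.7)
The plan is to argue by contrapositive: assume that a binary level-1 network $N$ exists which displays $\cT$, has $H$ as its set of high leaves, and has all leaves of $S'$ on the same side $S$; then I will exhibit an indegree-$0$ vertex of $\cD(\cT,S',H)$. The backbone of the argument has two steps. First, show that leaves lying in the same connected component of $\cO(\cT,S',H)$ must sit in the same pendant sidenetwork of $N$ on side $S$. Second, show that the arcs of $\cD(\cT,S',H)$ encode a strict top-to-bottom order among these sidenetworks along the cycle of $N$ containing $\rho(N)$; hence $\cD(\cT,S',H)$ is acyclic (and loopless), and the topmost sidenetwork on side $S$ yields the desired indegree-$0$ vertex.

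For the first step I would take any edge $\{a,b\}$ of $\cO(\cT,S',H)$ and verify case by case that $a$ and $b$ must be in the same pendant sidenetwork on side $S$. In the first bullet of the definition, $T$ is either cycle-rooted or has a common ancestor of $a,b$ distinct from the root of $T$; either way, since $N$ displays $T$ and both $a,b$ are high in $N$ on side $S$, the pendant sidenetworks of $N$ containing $a$ and $b$ must be identical (otherwise $N|\{a,b\}$ would be $T(a,b)$, incompatible with $T$). In the second bullet, $T$ is a trinet on $\{a,b,c\}$ with $c$ low, $a,b$ high and both inside the same pendant sidenetwork of $T$; by Observation~\ref{obs:subset} the same must hold in $N$ since $N|\{a,b,c\}=T$. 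In the third bullet, $T_1(a,b;c)\in\cT|S'$ forces $a$ and $b$ to have a non-root common ancestor in $N|S'$, and arguing as in the first bullet their sidenetworks coincide. Propagating this across edges shows each connected component of $\cO(\cT,S',H)$ is contained in a single pendant sidenetwork of $N$ on side $S$.

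For the second step, fix the cyclic structure of $N$: the cycle $C$ containing $\rho(N)$ has the reticulation $r$ as its bottom and the pendant sidenetworks on side $S$ hang from the non-reticulation, non-root vertices of $C$ on that side, which are linearly ordered from top to bottom. I would show that if $(\pi_1,\pi_2)$ is an arc of $\cD(\cT,S',H)$ witnessed by a trinet $S_2(x;y;z)\in\cT$ with $x\in V(\pi_1)$, $y\in V(\pi_2)$, $z\notin H$, then the sidenetwork containing $\pi_1$ is strictly higher on $C$ than the one containing $\pi_2$. This follows by inspecting the structure of $S_2(x;y;z)$ (Figure~\ref{fig:nets}) together with the fact that $N|\{x,y,z\}=S_2(x;y;z)$: because $z$ is low in $N$ while $x,y$ are high and on the same side, the only way this trinet arises is if one traverses the side of the cycle containing $x$ before reaching the attachment point of the sidenetwork containing $y$. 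In particular, $\pi_1\neq\pi_2$, so $\cD(\cT,S',H)$ has no loops.

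Combining the two steps, the arcs of $\cD(\cT,S',H)$ are consistent with the linear top-to-bottom order of the pendant sidenetworks on side $S$, so $\cD(\cT,S',H)$ is a loopless DAG and therefore has at least one indegree-$0$ vertex, namely the connected component (if any) contained in the topmost such sidenetwork, contradicting the hypothesis. The main obstacle I expect is the careful case analysis in the first step, making sure that every edge-generating clause in the definition of $\cO(\cT,S',H)$ really does force equality of pendant sidenetworks and not merely ``same side''; this is where the interplay between the bipartition feasibility assumption and Observation~\ref{obs:subset} must be used delicately.
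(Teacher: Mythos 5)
Your overall strategy (extract a source of an order induced by the network) is the right idea, but Step~1 contains a genuine error that propagates into Step~2. The third bullet in the definition of $\cO(\cT,S',H)$ does \emph{not} force $a$ and $b$ into the same pendant sidenetwork: if $T_1(a,b;c)\in\cT|S'$, the network $N$ may display this trinet with $a$ and $b$ in \emph{different} pendant sidenetworks, provided $c$ lies in a sidenetwork strictly above both (restricting $N$ to three high leaves on one side yields a caterpillar tree, and the caterpillar with $c$ attached topmost is exactly $T_1(a,b;c)$). The paper itself uses only this weaker disjunction in the proof of Lemma~\ref{lem:displays}. Consequently a connected component of $\cO(\cT,S',H)$ can span several sidenetworks, and then Step~2 collapses: an arc of $\cD(\cT,S',H)$ witnessed by $S_2(x;y;z)$ only tells you that the sidenetwork of the \emph{leaf} $x$ is strictly above that of the leaf $y$, not that one whole component sits above another. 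Indeed $\cD(\cT,S',H)$ can have loops --- take $\cT\supseteq\{T_1(a,b;c),\,S_2(a;b;d)\}$ with $a,b,c$ high on one side in singleton sidenetworks ordered $c,a,b$ from the root down and $d$ low: then $\{a,b\}$ is a single component of $\cO$ and $S_2(a;b;d)$ creates a loop on it --- which is why the paper defines $\cD$ as ``possibly having loops''. So $\cD$ need not be a loopless DAG, and the ``every finite DAG has a source'' conclusion is unavailable.

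The repair is the paper's argument: order the pendant sidenetworks on the relevant side from the root downwards, let $L_i$ be the \emph{topmost} one meeting $S'$, and argue only about $L_i\cap S'$. For that particular set the third-bullet edges cannot escape, since the witness $c$ lies in $S'$ and hence cannot sit in a sidenetwork strictly above $L_i$; so $L_i\cap S'$ is a union of connected components of $\cO(\cT,S',H)$. Each of these components then has indegree~$0$ in $\cD(\cT,S',H)$, because an incoming arc would require some $x\in S'$ lying in a sidenetwork strictly above $L_i$, which is impossible. Your treatment of the first two bullets is essentially sound, though for the first bullet the clean justification is that when $a,b\in S'$ lie in different sidenetworks, $N|\cL(T)$ is a caterpillar in which neither the root nor any common ancestor of $a$ and $b$ lies on a cycle (rather than the binet-restriction argument you give, which does not by itself rule out cycle-rooted trinets whose restriction to $\{a,b\}$ is $T(a,b)$).
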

\begin{proof}
Suppose that there exists such a network~$N$. Let~$\{L,R\}$ be the bipartition of~$H$ induced by~$N$ and suppose without loss of generality that~$L\cap S' \neq \emptyset$. Let~$L_1,\ldots ,L_q$ be the partition of~$L$ induced by the pendant sidenetworks of~$N$, \celB{ordered from the nearest to the farthest from the root}. Let~$i$ be the first index for which~$L_i\cap S'\neq \emptyset$. Then, by the definition of~$\cO(\cT,S',H)$,~$L_i\cap S'$ is the union of one or more connected components of~$\cO(\cT,S',H)$. Each of these connected components has indegree~0 in $\cD(\cT,S',H)$.
\end{proof}

Let~$\cT,X,H,L$ and~$R$ be as above. We present a \emph{sidenetwork partitioning algorithm}, which proceeds as follows for each~$S\in\{L,R\}$. Choose one indegree-0 vertex of $\cD(\cT,S,H)$ and call it~$S_1$. This will be the set of leaves of the first pendant sidenetwork on side~$S$. Then, construct the graph $\cO(\cT,S\setminus S_1,H)$ and digraph $\cD(\cT,S\setminus S_1,H)$, select an indegree-0 vertex and call it~$S_2$. Continue like this, i.e. let~$S_i$ be an indegree-0 vertex of $\cD(\cT,S\setminus (S_1\cup\ldots\cup S_{i-1} ),H)$, until an empty graph or a digraph with no indegree-0 vertex is obtained. In the latter case, there is no valid solution (under the given assumptions) by Lemma~\ref{lem:noindeg0}. Otherwise, we obtain sets $L_1,\ldots ,L_q$ and $R_1,\ldots ,R_{q'}$ containing the leaves of the pendant sidenetworks on both sides.

In the example in Figure~\ref{fig:O-D}, the only indegree-0 vertex of $\cD(\cT,R,H)$ is~$\{c,d\}$. Hence, we have~$R_1=\{c,d\}$. Since $\cO(\cT,R\setminus\{c,d\})$ is connected, $R_2=\{e,f,g,h,i,m\}$ follows.

We build a binary level-1 network~$N^*$ based on the sets $H,L_1,\ldots ,L_q, R_1,\ldots ,R_{q'}$ as follows. Let~$N(L_i)$ be a binary level-1 network displaying~$\cT|L_i$ for $i=1,\ldots ,q$ and let~$N(R_i)$ be a binary level-1 network displaying~$\cT|R_i$ for $i=1,\ldots ,q'$ \leooo{(note that it is possible that one of~$q$ and~$q'$ is~0.)}. We can build these networks recursively, and they exist by Observation~\ref{obs:subset}. In addition, we recursively build a network~$N(\low)$ displaying~$\cT|\low$ with~$\low=X\setminus H$. Now we combine these networks into a single network~$N^*$ as follows. We create a root~$\rho$, a reticulation~$r$, and two directed paths $(\rho,u_1,\ldots ,u_q,r)$, $(\rho,v_1,\ldots ,v_{q'},r)$ from~$\rho$ to~$r$  \leooo{(if~$q=0$ (respectively~$q'=0$) then there are no internal vertices on the first (resp. second) path)}. Then we add an arc from~$u_i$ to the root of~$N(L_i)$, for~$i=1,\ldots ,q$, we add an arc from~$v_i$ to the root of~$N(R_i)$ for~$i=1,\ldots ,q'$ and, finally, we add an arc from~$r$ to the root of~$N(\low)$. This completes the construction of~$N^*$. For an example, see Figure~\ref{fig:network}. 

We now prove that the network~$N^*$ constructed in this way displays the input trinets, assuming that 
there exists some solution that has~$H$ as its set of high leaves and~$\{L,R\}$ as the bipartition of~$H$ induced by it.

\begin{lemma}\label{lem:displays}
Let~$\cT$ be a set of binets and trinets, let~$N$ be a cycle-rooted binary level-1 network displaying~$\cT$, let~$H$ be the set of leaves that are high in~$N$ and let~$\{L,R\}$ be the feasible bipartition of~$H$ induced by~$N$. Then the binary level-1 network~$N^*$ constructed above displays~$\cT$.
\end{lemma}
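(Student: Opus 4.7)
The plan is a proof by strong induction on $|X|$, with the trivial case $|X|\leq 2$ serving as the base. For the inductive step I would assume the result, together with successful termination of the entire construction, for all strictly smaller taxon sets. Two things have to be established: (i) that the sidenetwork partitioning algorithm of Section~\ref{subsec:sidenetworks} runs to completion on both sides, and (ii) that the assembled network $N^*$ displays every trinet and binet of~$\cT$.

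For (i), let $P_1,\ldots,P_k$ be the leaf sets of the pendant sidenetworks of $N$ on side $S\in\{L,R\}$, ordered from nearest to farthest from the root of~$N$. From the definition of $\cO(\cT,S,H)$ each $P_j$ is a union of connected components of that graph, and from the definition of $\cD(\cT,S,H)$ there can be no arc from a component inside $P_j$ to a component inside $P_i$ whenever $i<j$; in particular the components of $\cO(\cT,S,H)$ contained in $P_1$ are indegree-$0$ vertices of $\cD(\cT,S,H)$, so the algorithm can select~$S_1$. The same argument applied to $N$ restricted to $S\setminus S_1$ (a valid witness by Observation~\ref{obs:subset}) takes care of subsequent iterations, hence the partitioning never aborts. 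For each block $B\in\{L_1,\ldots,L_q,R_1,\ldots,R_{q'},\low\}$, Observation~\ref{obs:subset} supplies a binary level-1 network $N|B$ displaying $\cT|B$, and the induction hypothesis then yields the recursive networks $N(B)$ that are glued together to form~$N^*$.

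The hard part is (ii). Take any $T\in\cT$; if $\cL(T)$ is contained in a single block $B$ then $T\in\cT|B$ and the copy of $N(B)$ sitting inside $N^*$ displays~$T$. Otherwise I would proceed by case analysis on which blocks the leaves of $T$ fall into, combined with the type of $T$ drawn from the list in Figure~\ref{fig:nets}. The feasibility of $\{L,R\}$ (Lemma~\ref{lem:feasible}) handles leaves split between $L$ and $R$, forcing $T$ into a type compatible with the largish cycle at the root of~$N^*$. When the leaves lie on the same side $S$ but in distinct $S_i$'s, the edge rules of $\cO(\cT,S',H)$ (which force ``sidenetwork-mate'' leaves into the same block) and the arc rules of $\cD(\cT,S',H)$ (which order the $S_i$'s consistently with every $S_2(x;y;z)$ trinet containing a low~$z$) reduce the possibilities for $T$ to a short list of types, all of which are displayed by~$N^*$. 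The remaining case, where some leaves lie in $\low$, is treated analogously, using that the reticulation of $N^*$ sits below every pendant sidenetwork, so that low leaves are strictly below everything high. The chief obstacle is precisely this enumeration: one must keep careful track of the ordering of the $S_i$'s chosen by the algorithm against the ordering of pendant sidenetworks in~$N$, and verify for each trinet type that the recursively computed binet/trinet on the non-straddling leaves supplies exactly the piece needed to recover~$T$ inside~$N^*$.
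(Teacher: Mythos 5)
Your overall strategy coincides with the paper's: induction on the number of leaves, recursion into the blocks via Observation~\ref{obs:subset}, and then an exhaustive case analysis over the binet/trinet types of Figure~\ref{fig:nets} according to how the leaves are distributed over the blocks. Your part~(i) is essentially redundant with Lemma~\ref{lem:noindeg0} (since every subset $S'$ of a side of $N$ consists of leaves on the same side of $N$, the contrapositive of that lemma already guarantees an indegree-$0$ vertex at every iteration), but it is correct.

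The genuine gap is part~(ii): you announce the case analysis and then stop, and that enumeration is not a routine verification --- it is the entire content of the lemma. Two of the required arguments in particular are not consequences of the general principles you invoke. First, for trinets of type $N_1$, $N_4$, $N_2$, $N_5$, $S_1$ and $S_2$ one must show that the ``odd leaf out'' $z$ actually ends up \emph{low} in $N^*$ (or that $x,y$ end up low, in the $N_2/N_5$ cases); this does not follow from feasibility or from the graphs $\cO$ and $\cD$, but from the arc structure of $\Omega(\cT)$ together with the fact that $H$ was chosen as a union of components of $\cK(\cT)$ with no incoming arc --- the trinet itself induces an arc of $\Omega(\cT)$ between the component of $\{x,y\}$ and the component of $\{z\}$, and only one orientation is compatible with the choice of $H$. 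Your sketch never uses $\Omega(\cT)$ at all. Second, for $T=S_2(x;y;z)$ one must rule out that the algorithm places $x$ and $y$ in the same block $S_i$ or places $y$'s block above $x$'s; the paper does this by examining the first iteration $i$ with $x,y\in S\setminus(S_1\cup\dots\cup S_{i-1})$ and noting that the arc of $\cD$ induced by $T$ prevents $y$'s component from having indegree~$0$, so $y\notin S_i$. You gesture at ``keeping track of the ordering of the $S_i$'s'' but do not supply this argument, and without it the claim that $N^*$ displays the $S_2$-trinets is unsupported. Similarly, the $T_1(x,y;z)$ case with all three leaves on one side needs the third bullet in the definition of $\cO$ applied to \emph{every} residual set $R'$ containing $x,y,z$, not just to the initial one. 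As written, your proposal is a correct plan whose decisive steps remain unexecuted.
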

\begin{proof}
The proof is by induction on the number~$|\cL(\cT)|$ of leaves in~$\cT$. The induction basis for $|\cL(\cT)|\leq 2$ is trivial. Hence, assume $|\cL(\cT)|\geq 3$.

For each pendant subnetwork~$N'$ of~$N^*$ with leaf-set~$X'$, there exists a binary level-1 network displaying~$\cT|X'$ by Observation~\ref{obs:subset}. Hence, the network~$N'$ that has been computed recursively by the algorithm displays~$\cT|X'$ by induction. It follows that any trinet or binet whose leaves are all in the same pendant sidenetwork of~$N^*$ is displayed by~$N^*$. Hence, it remains to consider binets and trinets containing leaves in at least two different pendant subnetworks of~$N^*$.

Let~$B\in\cT$ be a binet on leaves that are in two different pendant subnetworks of~$N^*$. If~$B=N(y;x)$ then, because~$B$ is displayed by~$N$,~$y$ is low in~$N$ and hence also low in~$N^*$. Since~$x$ and~$y$ are in different pendant subnetworks of~$N^*$, it follows that~$x$ is high in~$N^*$ and hence~$B$ is displayed by~$N^*$. If~$B=T(x,y)$ then there is an edge~$\{x,y\}$ in~$\cK(\cT)$ and hence~$x$ and~$y$ are at the same height in~$N^*$. Since~$x$ and~$y$ are in different pendant subnetworks, both must be high in~$N^*$ and it follows that~$N^*$ displays~$B$.

Now consider a trinet~$T\in\cT$ on leaves~$x,y,z$ that are in at least two different pendant subnetworks. At least one of~$x,y,z$ is high in~$N^*$ because otherwise all three leaves would be in the same pendant subnetwork~$N(\low)$, with~$\low=X\setminus H$. We now consider the different types of trinet that~$T$ can be.

First suppose that~$T=T_1(x,y;z)$. Then~$x,y,z$ form a clique in~$\cK(\cT)$ and hence all of~$x$, $y$ and~$z$ are high in~$N^*$. Moreover, by feasibility,~$x$ and~$y$ are in the same part of the bipartition~$\{L,R\}$ and hence on the same side in~$N^*$. If~$x$ and~$y$ are in the same pendant subnetwork then the binet $T|\{x,y\}=T(x,y)$ is displayed by this pendant subnetwork. Hence, in that case,~$T$ is clearly displayed by~$N^*$. Now assume that~$x$ and~$y$ are in different pendant subnetworks and assume without loss of generality that~$x,y\in R$. If~$z\in L$ then, again,~$T$ is clearly displayed by~$N^*$. Hence assume that~$x,y,z\in R$. Then, for each set~$R'\subseteq R$ containing~$x,y,z$, the graph $\cO(\cT,R',H)$ has an edge between~$x$ and~$y$. Hence, either~$x$ and~$y$ are in the same pedant sidenetwork, or~$z$ is in a pendant sidenetwork above the pendant sidenetworks that contain~$x$ and~$y$. Hence,~$T$ is displayed by~$N^*$.

Now suppose that~$T\in\{N_1(x,y;z),N_4(x;y;z)\}$. Then there is an edge~$\{x,y\}$ in~$\cK(\cT)$ and hence~$x$ and~$y$ are at the same height in~$N^*$. First note that~$x,y$ and~$z$ are not all high in~$N^*$ because otherwise~$x$,~$y$ and~$z$ would all be in the same part~$S$ of the bipartition~$\{L,R\}$ by feasibility and in the same pendant sidenetwork because they form a clique in~$\cO(\cT,S,H)$. Hence,~$z$ is not at the same height as~$x$ and~$y$ and hence~$z$ is not in the same connected component of~$\cK(\cT)$ as~$x$ and~$y$. Then there is an arc~$(\pi,\pi')$ in~$\Omega(\cT)$ with~$\pi$ the component containing~$x$ and~$y$ and~$\pi'$ the component containing~$z$. Hence~$x$ and~$y$ are high in~$N^*$ and~$z$ is low in~$N^*$ (since $\pi'$ has indegree greater than zero). Then,~$x$ and~$y$ are in the same part~$S$ of the bipartition~$\{L,R\}$ by feasibility and in the same pendant subnetwork of~$N^*$ because there is an edge~$\{x,y\}$ in~$\cO(\cT,S)$. Hence, since the binet $T|\{x,y\}$ is displayed by the pendant subnetwork containing~$x$ and~$y$, we conclude that~$T$ is displayed by~$N^*$.

Now suppose that~$T=S_1(x,y;z)$. We can argue in the same way as in the previous case that~$x$ and~$y$ are high in~$N^*$ and that~$z$ is low in~$N^*$. By feasibility,~$x$ and~$y$ are in different parts of the bipartition~$\{L,R\}$ and, hence,~$N^*$ displays~$T$.

Now suppose that~$T\in\{N_2(x,y;z),N_5(x;y;z)\}$. Then we can argue as before that~$x$ and~$y$ are at the same height in~$N^*$ and that~$z$ is not at the same height as~$x$ and~$y$ and hence that~$z$ is not in the same connected component of~$\cK(\cT)$ as~$x$ and~$y$. Then there is an arc~$(\pi,\pi')$ in~$\Omega(\cT)$ with~$\pi$ the component containing~$z$ and~$\pi'$ the component containing~$x$ and~$y$. Hence,~$z$ is high in~$N^*$ and~$x$ and~$y$ are low in~$N^*$. Since the binet $T|\{x,y\}$ is displayed by~$N(\low)$, we conclude that~$T$ is displayed by~$N^*$.

Now suppose that~$T=N_3(x;y;z)$. Observe that $x,y,z$ are all high in $N_3(x;y;z)$ because this trinet is not cycle-rooted. Therefore,~$x,y,z$ form a clique in~$\cK(\cT)$ and hence all of~$x$, $y$ and~$z$ are high in~$N^*$. Moreover, by feasibility,~$x$ and~$y$ are in the same part of the bipartition~$\{L,R\}$, say in~$R$, and hence on the same side in~$N^*$. First suppose that~$z\in L$. Then,~$\cT|R$ contains the binet $T|\{x,y\}$ which is cycle-rooted. Hence, there is an edge~$\{x,y\}$ in~$\cO(\cT,R,H)$ and~$x$ and~$y$ are in the same pendant sidenetwork. Since $T|\{x,y\}$ is displayed by this pendant subnetwork, it follows that~$T$ is displayed by~$N^*$. Now assume that~$z\in R$. Then the trinet~$T$ is in~$\cT|R$ and has a common ancestor of~$x$ and~$y$ contained in a cycle. Hence, as before,~$x$ and~$y$ are in the same pendant sidenetwork of~$N^*$ and, since $T|\{x,y\}$ is displayed by that pendant sidenetwork, it follows that~$T$ is displayed by~$N^*$.

Finally, suppose that~$T=S_2(x;y;z)$. As in the case $T\in\{N_1(x,y;z),N_4(x;y;z)\}$, we can argue that~$x$ and~$y$ are high in~$N^*$ and that~$z$ is low in~$N^*$. Then, by feasibility,~$x$ and~$y$ are on the same side~$S$ in~$N^*$. First suppose that~$x$ and~$y$ are in the same pendant sidenetwork of~$N^*$. Consider an iteration~$i$ of the sidenetwork partitioning algorithm  with $x,y\in S\setminus (S_1\cup\ldots\cup S_{i-1})$. Then there is an arc~$(\pi_1,\pi_2)$ in~$\cD(\cT,S\setminus (S_1\cup\ldots\cup S_{i-1}),H)$  with~$x\in V(\pi_1)$ and~$y\in V(\pi_2)$ (possibly $\pi_1=\pi_2$). Hence, $S_i$ does not contain~$y$ because~$\pi_2$ does not have indegree-0. It follows that~$x$ and~$y$ are in different sidenetworks and that the sidenetwork containing~$x$ is above the sidenetwork containing~$y$. Hence,~$N^*$ displays~$T$, which concludes the proof of the lemma.
\end{proof}

\begin{theorem}
There exists an $O(3^{|X|} poly(|X|))$ time algorithm that decides if there exists a binary level-1 network~$N$ that displays a given set~$\cT$ of binets and trinets.\label{thm:exptime}
\end{theorem}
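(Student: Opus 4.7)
The plan is to assemble the structural results of Sections~\ref{subsec:root}--\ref{subsec:sidenetworks} into a single bottom-up algorithm that memoises its answer for every subset of $X$, and then to bound the running time by a subset-counting argument.

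The algorithm processes the subsets $S\subseteq X$ in increasing order of $|S|$ and, for each, attempts to compute a binary level-1 network $N(S)$ displaying $\cT|S$ (or to report failure). Singletons and pairs are trivial. For larger $S$, the algorithm first constructs $R(\cT|S)$: if it is disconnected, Lemma~\ref{lem:disconnected} assembles $N(S)$ from the already-memoised networks on its connected components. If it is connected, Lemma~\ref{lem:cycle} forces $N(S)$ to be cycle-rooted, and the algorithm branches on whether the root cycle is tiny or largish. In each subcase it enumerates every candidate set $H$ of high leaves, namely $H=\bigcup_{\pi\in U}V(\pi)$ for each nonempty strict subset $U$ of the vertices of $\Omega^\dagger(\cT|S)$ or $\Omega(\cT|S)$ with no incoming arc (Lemmas~\ref{lem:scc2} and~\ref{lem:high}).

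For each candidate $H$, in the largish-cycle case the algorithm then enumerates the feasible bipartitions $\{L,R\}$ of $H$ via $2$-colourings of $\cW(\cT|S,H)$ (Lemma~\ref{lem:2colouring}) and, for each such bipartition, runs the sidenetwork partitioning procedure of Section~\ref{subsec:sidenetworks} on the digraphs $\cD$ to split each side into pendant-sidenetwork blocks $L_1,\dots,L_q$ and $R_1,\dots,R_{q'}$. Each block, together with $\low := S\setminus H$, is a proper subset of $S$ whose network has already been memoised, so the candidate $N^*(S)$ can be assembled as in the paragraph preceding Lemma~\ref{lem:displays} and accepted if it displays $\cT|S$. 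The tiny-cycle subcase is handled symmetrically using Lemma~\ref{lem:scc2}. Correctness then follows by induction on $|S|$ from Lemmas~\ref{lem:disconnected}--\ref{lem:displays}: whenever a valid network exists, its induced choices of $H$, $\{L,R\}$ and sidenetwork partition occur among the enumerated triples, and Lemma~\ref{lem:displays} certifies the corresponding $N^*(S)$.

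The main obstacle is to squeeze the enumeration into the claimed $O(3^{|X|}\mathrm{poly}(|X|))$ bound. The key observation is that, once $H\subseteq S$ is fixed, the remaining work of finding a feasible bipartition (a $2$-colouring of $\cW$) and of running the sidenetwork partitioning (iteratively picking indegree-$0$ vertices of $\cD$) can be arranged to cost $\mathrm{poly}(|S|)$ per candidate combination, with the structural lemmas ensuring that any surviving combination yields an $N^*(S)$ certified by Lemma~\ref{lem:displays}. Consequently, the work attributable to a single subset $S$ is $O(2^{|S|}\mathrm{poly}(|S|))$, since $H$ ranges over at most $2^{|S|}$ subsets of $S$. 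Memoisation then gives a total running time of
\[
\sum_{S\subseteq X} 2^{|S|}\mathrm{poly}(|S|) = \mathrm{poly}(|X|)\sum_{k=0}^{|X|}\binom{|X|}{k}2^{k} = O(3^{|X|}\mathrm{poly}(|X|)),
\]
matching the claimed bound. The delicate step I expect to spend most care on is verifying that the sub-enumerations over feasible bipartitions and sidenetwork partitions for a given $H$ can indeed be absorbed into the polynomial factor rather than inflating the base of the exponential.
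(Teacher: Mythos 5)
Your algorithm is essentially the paper's (same graphs, same lemmas, same recursive assembly), so the correctness half of the proposal is fine. The genuine gap is exactly the step you flagged: the claim that the work attributable to a subset $S$ is $O(2^{|S|}\mathrm{poly}(|S|))$ ``since $H$ ranges over at most $2^{|S|}$ subsets of $S$''. For a fixed candidate $H$ you cannot get away with finding \emph{a} feasible bipartition in polynomial time: Lemma~\ref{lem:displays} only certifies the network $N^*$ built from the bipartition \emph{induced by an actual solution}, so to guarantee you hit that one you must enumerate \emph{all} proper $2$-colourings of $\cW(\cT|S,H)$, and there are $2^{c-1}$ of them where $c$ is the number of connected components of $\cW$ (which can be as large as $|H|$, e.g.\ when $\cM$ and $\cW$ have no edges and every bipartition is feasible). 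Hence the number of candidate pairs $(H,\{L,R\})$ for a given $S$ is governed by the number of tripartitions of $S$ into $(L,R,S\setminus H)$, i.e.\ up to $3^{|S|}$, not $2^{|S|}$. Plugging the corrected per-subset cost into your sum gives $\sum_{S\subseteq X}3^{|S|}\mathrm{poly}(|S|)=O(4^{|X|}\mathrm{poly}(|X|))$, so the full-memoisation-over-all-subsets accounting cannot reach the claimed bound.

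The paper avoids this by a different bookkeeping: it counts the $3^{|S|}$ tripartitions directly (folding the choice of $H$ and of its bipartition into a single three-way assignment of each leaf), charges $O(3^{|X|}\mathrm{poly}(|X|))$ to each recursive step, and bounds the number of recursive steps by $|X|$ using the independence of the subproblems (Observation~\ref{obs:subset}), rather than summing a per-subset cost over all $2^{|X|}$ subsets. To repair your write-up you would need either to adopt that accounting, or to supply the missing argument that for each $H$ a single polynomial-time-computable feasible bipartition suffices --- which is not established by Lemmas~\ref{lem:2colouring} and~\ref{lem:displays} as they stand. (The sidenetwork partitioning itself is not a problem: Lemma~\ref{lem:displays} works for an arbitrary choice of indegree-$0$ vertex of $\cD$ at each step, so no enumeration is needed there.)
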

\begin{proof}

If the graph~$R(\cT)$ is disconnected and has connected components~$C_1,\ldots,C_p$, then we recursively compute binary level-1 networks~$N_1,\ldots ,N_p$ displaying $\cT|C_1,\ldots ,\cT|C_p$ respectively. Then, by Lemma~\ref{lem:disconnected},~$\cT$ is displayed by the binary level-1 network~$N'$ obtained by creating a root~$\rho$ and adding arcs from~$\rho$ to the roots of~$N|C_1,\ldots ,N|C_p$, and refining the root~$\rho$ in order to make the network binary.

If~$R(\cT)$ is connected, then any binary level-1 network~$N$ displaying~$\cT$ is cycle-rooted by Lemma~\ref{lem:cycle}. If there exists such a network that is tiny-cycle rooted, then we can find such a network by Lemma~\ref{lem:scc2}. 

Otherwise, we can ``guess'', using Lemma~\ref{lem:high}, a set of leaves~$H$ such that there exists some binary level-1 network~$N$ displaying~$\cT$ in which~$H$ is the set of leaves that are high. Moreover, using Lemma~\ref{lem:2colouring}, we can ``guess'' a feasible partition~$\{L,R\}$ of~$H$ with respect to~$\cT$ by ``guessing'' a proper 2-colouring of the graph~$\cW(\cT,H)$. The total number of possible guesses for the tripartition~$\{L,R,X\setminus H\}$ is at most~$3^{|X|}$.

If there exists a binary level-1 network~$N'$ displaying~$\cT$ then, by Lemma~\ref{lem:displays}, there exists some tripartition ~$(L,R,X\setminus H)$ for which network~$N^*$ from Lemma~\ref{lem:displays} displays all binets and trinets in~$\cT$.

It remains to analyse the running time. Each recursive step takes $O(3^{|X|} poly(|X|))$ time and the number of recursive steps is certainly at most~$|X|$, leading to $O(3^{|X|} poly(|X|))$ in total since, by Observation~\ref{obs:subset}, the various recursive steps are independent of each other.
\end{proof}

Note that the running time analysis in the proof Theorem~\ref{thm:exptime} is pessimistic since, by Lemma~\ref{lem:sameheight}, the set~$H$ of high leaves must be the union of a subset of the vertices of~$\Omega(\cT)$ with no incoming arcs. Moreover, the number of feasible bipartitions of~$H$ is also restricted because each such bipartition must correspond to a 2-colouring of the graph $\cW(\cT,H)$. Hence, the number of possible guesses is restricted (but still exponential).

We conclude this section by extending Theorem~\ref{thm:exptime} to instances containing networks with arbitrarily many leaves.

\begin{corollary}\label{cor:exptime}
There exists an $O(3^{|X|} poly(|X|))$ time algorithm that decides if there exists a binary level-1 network~$N$ that displays a given set~$\cN$ of binary level-1 networks.\label{col:exptime}
\end{corollary}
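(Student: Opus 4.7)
The plan is to reduce the problem for an arbitrary set $\cN$ of binary level-1 networks to the binet/trinet case already handled by Theorem~\ref{thm:exptime}, exploiting the fact that the set of trinets displayed by a binary level-1 network encodes that network, as shown in~\cite{huber2011encoding}. Concretely, given $\cN$ with taxa set $X = \cL(\cN)$, the algorithm first computes $\cT(\cN) = \bigcup_{N' \in \cN} \cT(N')$ by restricting each $N' \in \cN$ to every two- and three-element subset of $\cL(N')$; this can be done in time polynomial in $|X|$ and the total size of~$\cN$. It then invokes the algorithm of Theorem~\ref{thm:exptime} on the instance $\cT(\cN)$ and returns its output.

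For correctness I will verify two implications. If some binary level-1 network $N$ on $X$ displays $\cN$, then for each $N' \in \cN$ and each $T \in \cT(N')$ a short chain of restrictions shows that $N$ displays $T$, so $N$ displays $\cT(\cN)$ and the subroutine succeeds by Theorem~\ref{thm:exptime}. Conversely, suppose the subroutine returns a binary level-1 network $N$ that displays $\cT(\cN)$, and fix an arbitrary $N' \in \cN$ with $Y = \cL(N')$. For every $Z \subseteq Y$ with $|Z| \in \{2,3\}$, the binet or trinet $N'|Z$ lies in $\cT(\cN)$, so $N$ displays it, which gives $(N|Y)|Z \equiv N|Z \equiv N'|Z$. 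Hence $N|Y$ and $N'$ are two binary level-1 networks on the same leaf set $Y$ that display identical sets of binets and trinets, so the encoding theorem of~\cite{huber2011encoding} forces $N|Y \equiv N'$, i.e.~$N$ displays $N'$. Since $N' \in \cN$ was arbitrary, $N$ displays~$\cN$.

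The running time is dominated by the call to Theorem~\ref{thm:exptime}: the precomputation produces $O(|\cN|\cdot |X|^3)$ binets and trinets in polynomial time, and then the subroutine runs in $O(3^{|X|}\,poly(|X|))$ time. The only delicate point is the appeal to the encoding theorem, but it is used here purely as a black-box statement (two binary level-1 networks on the same leaf set that agree on all their binets and trinets are equivalent), so it slots in directly and presents no real obstacle.
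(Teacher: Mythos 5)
Your proposal is correct and follows essentially the same route as the paper: compute $\cT(\cN)$, run the algorithm of Theorem~\ref{thm:exptime} on it, and invoke the encoding result of~\cite{huber2011encoding} to conclude that $N|\cL(N')$ is equivalent to each $N'\in\cN$. You merely spell out a few steps the paper leaves implicit (the forward direction and the transitivity of restriction), so there is nothing to add.
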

\begin{proof}
We apply Theorem~\ref{thm:exptime} to the set~$\cT(\cN)$ of binets and trinets displayed by the networks in~$\cN$. To check that the resulting network~$N$ displays~$\cN$, consider a network~$N'\in\cN$. Since binary level-1 networks are encoded by their trinets~\cite{huber2011encoding}, any binary level-1 network displaying~$\cT(N')$ is equivalent to~$N'$. Hence,~$N|\cL(N')$ is equivalent to~$N'$. Therefore,~$N'$ is displayed  by~$N$.
\end{proof}

\section{Constructing a network from a set of tiny-cycle networks\label{sec:tinycycle}}

Recall that a network is a \emph{tiny-cycle} network if each cycle consists of exactly three vertices. It is easy to see that each tiny-cycle network is a level-1 network. Note that all binary level-1 binets and trinets except for $S_1(x,y;z)$ and $S_2(x;y;z)$ are tiny-cycle networks. We prove the following.

\begin{figure}[h]
 \centering
  \includegraphics{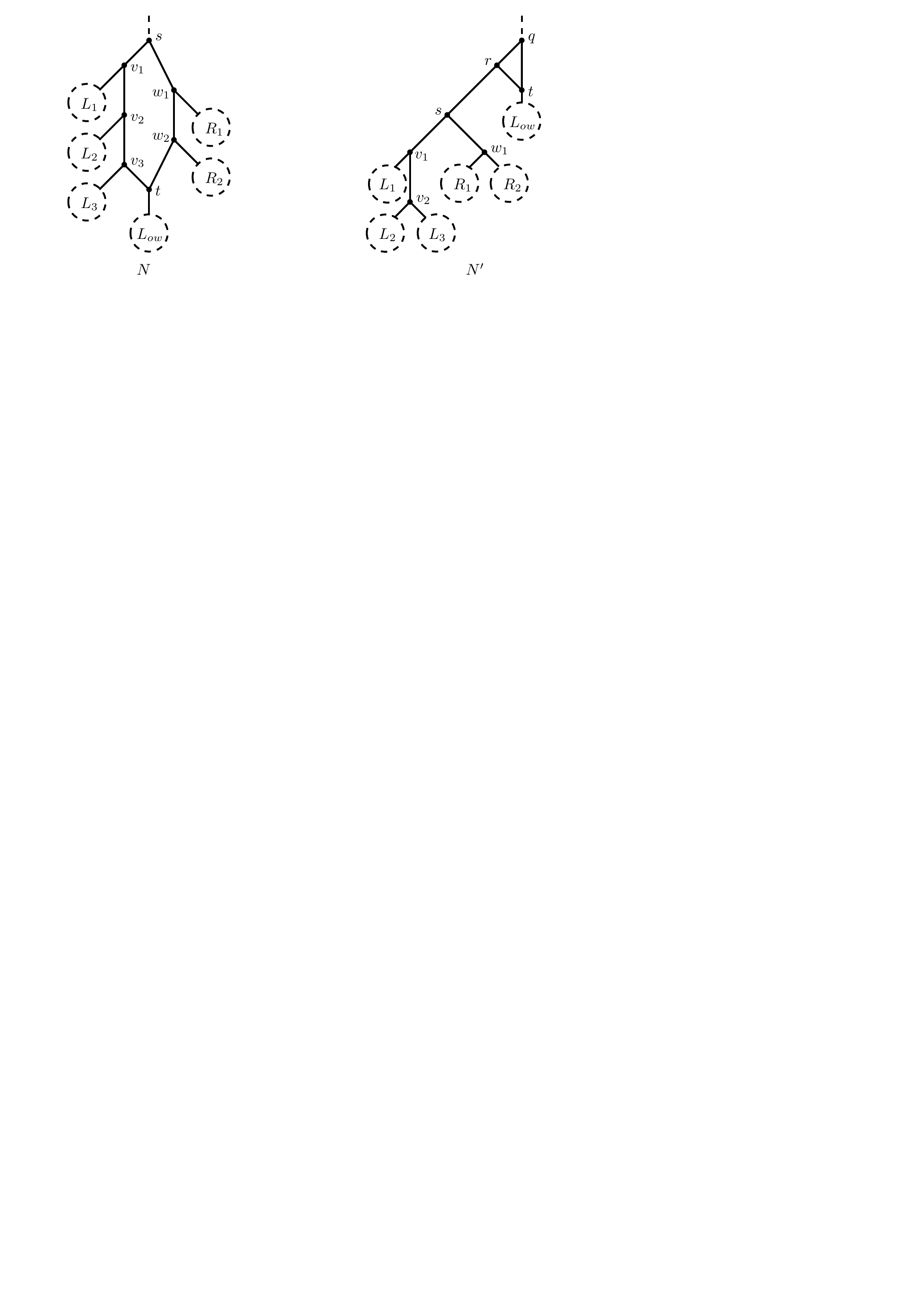}
 \caption{Transformation from a binary level-1 network~$N$ to a tiny-cycle network~$N'$, used in the proof of Theorem~\ref{thm:tinycycle} (with $n=3$ and $m=2$).}
 \label{fig:tinycycles}
\end{figure}

\begin{theorem}
Given a set~$\cT$ of tiny-cycle binets and trinets, we can decide in polynomial time if there exists a binary level-1 network displaying~$\cT$ and construct such a network if it exists.\label{thm:tinycycle}
\end{theorem}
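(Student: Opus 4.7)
The plan is to combine two ingredients: an existence argument showing that if some binary level-1 network displays $\cT$, then a tiny-cycle network does too; and a polynomial-time algorithm that constructs such a tiny-cycle network whenever one exists.

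For the existence step, I would prove that any largish cycle in a level-1 network $N$ displaying $\cT$ can be replaced by a vertical chain of tiny cycles without losing any tiny-cycle binet or trinet. Concretely, given a largish cycle $C$ in $N$ with pendant sidenetworks $L_1,\ldots,L_n$ on one side and $R_1,\ldots,R_m$ on the other (each listed from nearest-to-root to nearest-to-reticulation), and pendant low subnetwork $P_{\low}$ below the reticulation of $C$, the replacement $N'$ is the chain of $n+m$ tiny cycles illustrated in Figure~\ref{fig:tinycycles}: the $L_i$'s hang as pendant sidenetworks of the top $n$ tiny cycles in order, the $R_j$'s hang as pendant sidenetworks of the next $m$ tiny cycles in order, and $P_{\low}$ hangs below the bottom reticulation. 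The claim is then that every tiny-cycle binet or trinet displayed by $N$ remains displayed by $N'$. This reduces to a case check over the six tiny-cycle trinet shapes $T_1, N_1, N_2, N_3, N_4, N_5$ and the two binets: in each case the only structural features used are ``two leaves lying in the same pendant subnetwork'', ``two leaves at the same height'', and ``one leaf high and another low relative to the cycle'', all of which are preserved because each pendant sidenetwork remains pendant and because leaves of $P_{\low}$ remain low with respect to every $L_i$ and $R_j$.

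For the algorithmic step, once existence is in hand, the algorithm only ever needs to build tiny-cycle rooted networks, and the recursion follows the template of Section~\ref{sec:nondense}. If $R(\cT)$ is disconnected, we recurse on each connected component and combine as in Lemma~\ref{lem:disconnected}. Otherwise $R(\cT)$ is connected, so by Lemma~\ref{lem:cycle} any solution is cycle-rooted, and by the existence argument a tiny-cycle rooted solution must exist whenever any solution does. We then invoke Lemma~\ref{lem:scc2}: look for a nonempty strict subset $U$ of the vertices of $\Omega^\dagger(\cT)$ with no incoming arcs, which can be produced in polynomial time by taking (the vertex set of) a source strongly connected component of the condensation of $\Omega^\dagger(\cT)$. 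If no such $U$ exists the algorithm rejects, which is correct by the existence argument; otherwise it sets $H=\bigcup_{\pi\in U} V(\pi)$ and $\low=X\setminus H$, recurses on $\cT|H$ and $\cT|\low$ (both of which inherit the tiny-cycle property), and combines the results as prescribed in Lemma~\ref{lem:scc2}. The recursion depth is at most $|X|$ and each step runs in polynomial time, so the overall algorithm is polynomial.

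The main obstacle is the case-by-case verification of the largish-to-tiny-cycle transformation: one must check that the replacement preserves every tiny-cycle trinet displayed by $N$, across all possible positions of the three leaves relative to $C$ (inside one of the pendant sidenetworks of $C$, spread across two of them, or in $P_{\low}$). The most delicate cases involve $N_3$ and $N_5$, where the identity of the non-root common ancestor of two leaves is encoded via membership in a cycle; provided the chain respects the top-to-bottom order of the $L_i$'s and $R_j$'s and keeps $P_{\low}$ strictly below them, this identification goes through by routine inspection of Figure~\ref{fig:nets}.
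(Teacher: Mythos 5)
Your algorithmic half is essentially the paper's: recurse on the components of $R(\cT)$ when it is disconnected (Lemma~\ref{lem:disconnected}), note that a connected $R(\cT)$ forces a cycle-rooted solution (Lemma~\ref{lem:cycle}), and handle the tiny-cycle-rooted case via $\Omega^\dagger(\cT)$ and Lemma~\ref{lem:scc2}. The gap is in your existence step. The transformation you propose --- replacing a largish cycle $C$ by a vertical chain of $n+m$ tiny cycles, each carrying exactly one of the pendant sidenetworks $L_1,\ldots,L_n,R_1,\ldots,R_m$ --- does not preserve the displayed tiny-cycle binets and trinets. Take $a$ in $L_1$ and $b$ in any other pendant sidenetwork of $C$ (say $L_2$ or $R_1$). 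Both are high in $N$, so $N|\{a,b\}$ is the tree binet $T(a,b)$. In your chain, every directed path from the root of the topmost tiny cycle to $b$ passes through that cycle's reticulation, so the restriction to $\{a,b\}$ is $N(b;a)$ instead; if $T(a,b)\in\cT$ it is no longer displayed. The same failure propagates to trinets: $N_1(a,b;z)$ with $a\in L_1$, $b\in L_2$ and $z$ low in $C$ is displayed by $N$ but your chain displays a trinet of type $N_5$ on those leaves. So the property ``two leaves at the same height'' that your case check relies on is precisely what the chain destroys: it demotes every pendant sidenetwork except the topmost one to ``low''.

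The correct transformation (which is what Figure~\ref{fig:tinycycles} actually depicts) replaces each largish cycle by a \emph{single} tiny cycle: delete the two arcs entering the old reticulation $t$ and suppress the resulting degree-two vertices, so that the former cycle becomes a tree rooted at its top vertex $s$ carrying all the pendant sidenetworks in their original arrangement; then create a new tiny cycle with reticulation $t$, hang the tree rooted at $s$ from the non-reticulation side vertex of that tiny cycle, and keep the old low subnetwork below $t$. This keeps every formerly high leaf high, keeps the tree binet between high leaves in different sidenetworks, keeps every formerly low leaf low, and only destroys the $S_1$- and $S_2$-type restrictions, which are not tiny-cycle trinets and hence cannot occur in $\cT$. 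With that replacement your existence claim holds and the remainder of your argument goes through as written.
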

\begin{proof}
Let~$N$ be a binary level-1 network displaying~$\cT$. If~$N$ is not a tiny-cycle network, then we construct a tiny-cycle network~$N'$ from~$N$ as follows (see Figure~\ref{fig:tinycycles} for an illustration). For each cycle of~$N$ consisting of internally vertex-disjoint directed paths $(s,v_1,\ldots ,v_n,t)$ and $(s,w_1,\ldots ,w_m,t)$ with $n+m\geq 2$, do the following. 
\leoo{Delete arcs $(v_n,t)$ and $(w_m,t)$, suppress $v_n$ and $w_m$, add vertices~$q$ and~$r$ and arcs $(q,r)$, $(r,t)$, $(q,t)$ and $(r,s)$. Finally, if~$s$ is not the root of~$N$, let~$p$ be the parent of~$s$ in~$N$ and replace arc $(p,s)$ by an arc~$(p,q)$.} Let~$N'$ be the obtained network. It is easy to verify that any binary tiny-cycle network that is displayed by~$N$ is also displayed by~$N'$ and that~$N'$ is a tiny-cycle network. Hence, we may restrict our attention to constructing tiny-cycle networks.

The only two cases to consider are that the to be constructed network is not cycle-rooted and that it is tiny-cycle rooted. \leoo{We can deal with the first case by Lemmas~\ref{lem:disconnected} and~\ref{lem:cycle} and with the second case by Lemma~\ref{lem:scc2}.}
\end{proof}

Note that Theorem~\ref{thm:tinycycle} applies to sets of binets and trinets that do not contain any trinets of the form $S_1(x,y;z)$ and $S_2(x;y;z)$. The following corollary generalises this theorem to general instances of tiny-cycle networks. Its proof is analogous to the proof of Corollary~\ref{cor:exptime}.

\begin{corollary}
Given a set~$\cN$ of tiny-cycle networks, we can decide in polynomial time if there exists a binary level-1 network displaying~$\cN$ and construct such a network if it exists.\label{cor:tinycycle}
\end{corollary}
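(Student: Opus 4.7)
The plan is to imitate the proof strategy of Corollary~\ref{cor:exptime}, replacing the appeal to Theorem~\ref{thm:exptime} with an appeal to Theorem~\ref{thm:tinycycle}. Concretely, given a set $\cN$ of tiny-cycle networks on $X$, I would first compute the set $\cT(\cN)$ of all binets and trinets displayed by the networks in $\cN$. Since there are at most $\binom{|X|}{2}$ binets and $\binom{|X|}{3}$ trinets on $X$, and each can be extracted from a given network $N'\in\cN$ in polynomial time by restricting~$N'$ to the relevant leaf set, this step runs in time polynomial in $|X|$ and the total size of~$\cN$.

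Next I would check that $\cT(\cN)$ is eligible input for Theorem~\ref{thm:tinycycle}, i.e.\ that every binet and trinet in $\cT(\cN)$ is itself a tiny-cycle network. This is the step that requires a small verification: if $N'\in\cN$ is tiny-cycle and $X'\subseteq\cL(N')$ with $|X'|\in\{2,3\}$, then every cycle of $N'|X'$ arises from a cycle of $N'$ in which both sides still have a descendant in $X'$, and the restriction operation can only shorten such a cycle (by suppressing indegree-1 outdegree-1 vertices) or leave it of the same size. Hence $N'|X'$ contains no cycle of size larger than three, so none of its trinets are of type $S_1(x,y;z)$ or $S_2(x;y;z)$. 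I can then apply Theorem~\ref{thm:tinycycle} to $\cT(\cN)$ in polynomial time, obtaining either a binary level-1 network $N$ displaying $\cT(\cN)$ or a ``no'' answer.

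Correctness then follows from the encoding result of~\cite{huber2011encoding}, exactly as in Corollary~\ref{cor:exptime}. If $N$ exists, then for each $N'\in\cN$ the restriction $N|\cL(N')$ is a binary level-1 network that displays $\cT(N')\subseteq\cT(\cN)$; since binary level-1 networks are encoded by their trinets, this forces $N|\cL(N')$ to be equivalent to $N'$, so $N$ displays $N'$. Conversely, if some binary level-1 network displays $\cN$, then it displays $\cT(\cN)$, so the call to Theorem~\ref{thm:tinycycle} succeeds. Reporting ``no'' is therefore correct as well.

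The main potential obstacle is the first verification step, namely that restricting a tiny-cycle network to two or three leaves cannot produce a trinet of type $S_1$ or $S_2$; once this is in hand, the rest is bookkeeping. Combining all steps gives an overall polynomial-time algorithm, establishing the corollary.
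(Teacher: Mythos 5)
Your proposal is correct and follows essentially the same route as the paper, which proves the corollary by applying Theorem~\ref{thm:tinycycle} to $\cT(\cN)$ and then invoking the trinet-encoding result of~\cite{huber2011encoding} exactly as in Corollary~\ref{cor:exptime}. Your explicit check that restricting a tiny-cycle network to two or three leaves cannot create a cycle of size larger than three (so no $S_1$- or $S_2$-type trinets arise) is a worthwhile detail that the paper leaves implicit.
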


\section{Complexity of constructing a network from a set of trinets\label{sec:hardness}}

In this section, we show that it is NP-hard to construct a binary level-1 network from a nondense set of trinets.

\begin{theorem}
\label{nphard}
Given a set of trinets~$\cT$, it is NP-hard to decide if there exists a binary level-1 network~$N$ displaying~$\cT$. In addition, it is NP-hard to decide if there exists such a network with a single reticulation.
\end{theorem}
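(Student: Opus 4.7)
The plan is to reduce from \textsc{Set Splitting}: given a ground set $U$ and sets $S_1,\ldots,S_m\subseteq U$, decide whether there is a $2$-colouring of $U$ so that no $S_i$ is monochromatic. This problem is NP-hard and is also the source problem in the Jansson--Sung reduction for triplets. The key observation driving the reduction is that the trinet $S_1(x,y;z)$ directly forces $x$ and $y$ onto different sides of the cycle containing the lowest stable ancestor of $\{x,y,z\}$ in any displaying network, which is precisely condition~(F2) in Definition~\ref{def:feasible} and Lemma~\ref{lem:feasible}. Thus $S_1$-trinets are the natural gadget for encoding $2$-colouring constraints in the level-1 setting.

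For the construction, I would introduce one taxon $x_u$ per element $u\in U$, one auxiliary ``low'' taxon $z_i$ per set $S_i$, and a few anchor taxa whose role is to certify that a top cycle exists. The trinet set $\cT$ would consist of two families. The first family uses cycle-rooted trinets (for instance of types $N_1$, $N_2$, $N_4$) involving the anchors and low taxa to force, via the graphs of Section~\ref{subsec:root} and Section~\ref{subsec:high}, that every $x_u$ is \emph{high} in any displaying network and that all the $x_u$ sit above a common top cycle. The second family encodes the set constraints: for each $S_i=\{u_{i,1},\ldots,u_{i,k_i}\}$, a small collection of $S_1$-type trinets, chained through auxiliary duplicates of the $x_u$'s, designed so that $\cT$ is displayable precisely when the bipartition of the high leaves separates $S_i$ into both sides, i.e.\ when $S_i$ is non-monochromatic under the induced colouring. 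The use of auxiliary duplicates is essential because a single $S_1(x,y;z)$ trinet already forces a \emph{pair} to split, a strictly stronger constraint than its triplet analogue $\{x,y\mid z\}$ imposes; the set constraints therefore have to be disjunctively assembled out of pairwise forced splits, and this is exactly where the reduction becomes more intricate than for triplets.

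Once the trinets are in place, the forward direction is routine: from a splitting $2$-colouring of $U$, build the level-1 network consisting of a single top cycle with the two colour classes as pendant sidenetworks on the two sides and the $z_i$'s and anchors hung below the reticulation, then verify directly that every trinet in $\cT$ is displayed. The backward direction is the crux: the first family forces any binary level-1 network $N$ that displays $\cT$ to be cycle-rooted with all $x_u$ high on a common top cycle, and the bipartition of the high leaves induced by $N$ is then, by the second family, a valid splitting of each $S_i$. For the single-reticulation statement, I would refine the gadgets so that every candidate pendant sidenetwork and the low subnetwork are forced to be trees---for example, by including enough tree-type trinets $T_1(\cdot,\cdot;\cdot)$ inside each would-be side that the graph $R$ considered recursively in Lemma~\ref{lem:disconnected} is disconnected there---ensuring the whole network has exactly one cycle. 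The main obstacle I expect is calibrating the per-set gadget so as to rule out ``cheating'' solutions: displaying networks in which some $x_u$ is pushed down into the low subnetwork, or in which an auxiliary taxon absorbs a constraint meant for an $x_u$, or in which an extra cycle is introduced just to accommodate an otherwise incompatible $S_1$-trinet. Preventing these will require redundant copies of each $x_u$ together with complementary $N_3$- or $N_4$-type trinets that rigidly pin down the position of each taxon in $N$.
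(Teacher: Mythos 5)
You have the right source problem (\textsc{SetSplitting}), the right overall architecture (a single top cycle with the element-taxa high and the colouring read off from the induced bipartition $\{L,R\}$ of the high leaves), and the right key observation that $S_1(x,y;z)$ forces $x$ and $y$ onto different sides; all of this matches the paper. But the heart of the reduction --- the per-set gadget certifying that a set $\{u,v,w\}$ is not monochromatic --- is exactly the step you leave open, and the route you sketch for it cannot work as stated. A collection of $S_1$-type trinets, even chained through auxiliary duplicates, imposes only hard ``these two taxa lie on different sides'' constraints; a conjunction of such constraints is just a graph $2$-colouring (bipartiteness) condition, decidable in polynomial time, and it cannot express the disjunction ``at least one pair of $\{u,v,w\}$ is split.'' You correctly note that the set constraint must be ``disjunctively assembled,'' but you supply no mechanism for the disjunction, and $S_1$-trinets alone provide none.

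The paper's actual gadget uses a different mechanism, namely the \emph{vertical ordering} of pendant sidenetworks along one side of the cycle. For each $C_i=\{u,v,w\}$ with $u<v<w$ it introduces taxa $u_i,u_i',v_i,v_i',w_i,w_i'$ and a single global low taxon $b$, and adds: $S_1(u_i,u_0;b)$ (and analogues), tying each $u_i$ to the side opposite $u_0$; $S_2(u_i;u_i';b)$ (and analogues), forcing $u_i$ and $u_i'$ into \emph{distinct} pendant sidenetworks on the same side with $u_i$ above $u_i'$; and the cyclic triple $T_1(v_i,v_i';u_i)$, $T_1(w_i,w_i';v_i)$, $T_1(u_i,u_i';w_i)$. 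If all of $u_0,v_0,w_0$ were on one side, then $u_i,v_i,w_i$ would all be on the other, and these trinets would force the sidenetwork of $w_i$ strictly above that of $u_i$, that of $u_i$ strictly above that of $v_i$, and that of $v_i$ strictly above that of $w_i$ --- an impossible cyclic order. A $T_1$-constraint is vacuously satisfied whenever its third taxon sits on the opposite side, and this is precisely what makes the constraint disjunctive. Without this (or an equivalent) ordering trick the reduction does not go through. As a side remark, your extra machinery for the single-reticulation variant is unnecessary: the network built in the forward direction of the paper's reduction already has exactly one reticulation, and the backward direction holds for arbitrary binary level-1 networks, so one reduction settles both claims.
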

\begin{proof}
We reduce from the NP-hard problem \textsc{SetSplitting} \cite{GareyJohnson1979}. The reduction is a nontrivial adaptation of the reduction given by Jansson, Nguyen and Sung for the triplet-version of the problem~\cite{JanssonEtAl2006}. 

The \textsc{SetSplitting} problem is defined as follows. Given a set~$U$ and a collection~$\cC$ of size-3 subsets of~$U$, decide if there exists a bipartition of~$U$ into sets~$A$ and~$B$ such that for each~$C\in\cC$ holds that $C\cap A\neq\emptyset$ and $C\cap B\neq\emptyset$? If such a bipartition exists, then we call it a \emph{set splitting} for~$\cC$.

The reduction is as follows. For an example see Figure~\ref{fig:reduc}. Assume that~$\cC=\{C_1,\ldots ,C_k\}$, with~$k\geq 1$, and furthermore that the elements of~$U$ are totally ordered (by an operation $<$). We create a taxon set~$X$ and a trinet set~$\cT$ on~$X$ as follows. For each~$u\in U$, put a taxon~$u_0$ in~$X$. In addition, add a special taxon~$b$ to~$X$. Then, for each~$C_i=\{u,v,w\}$ with~$u<v<w$, add taxa~$u_i,u_i',v_i,v_i',w_i,w_i'$ to~$X$ and add the following trinets to~$\cT$: $T_1(v_i,v_i';u_i)$, $T_1(w_i,w_i';v_i)$, $T_1(u_i,u_i';w_i)$, $S_2(v_i;v_i';b)$, $S_2(w_i;w_i';b)$, $S_2(u_i;u_i';b)$, $S_1(u_i,u_0;b)$, $S_1(v_i,v_0;b)$, $S_1(w_i,w_0;b)$, resulting in~$\cT$ comprising of three different types of trinets. This completes the reduction.

\begin{figure}
  \centerline{\includegraphics{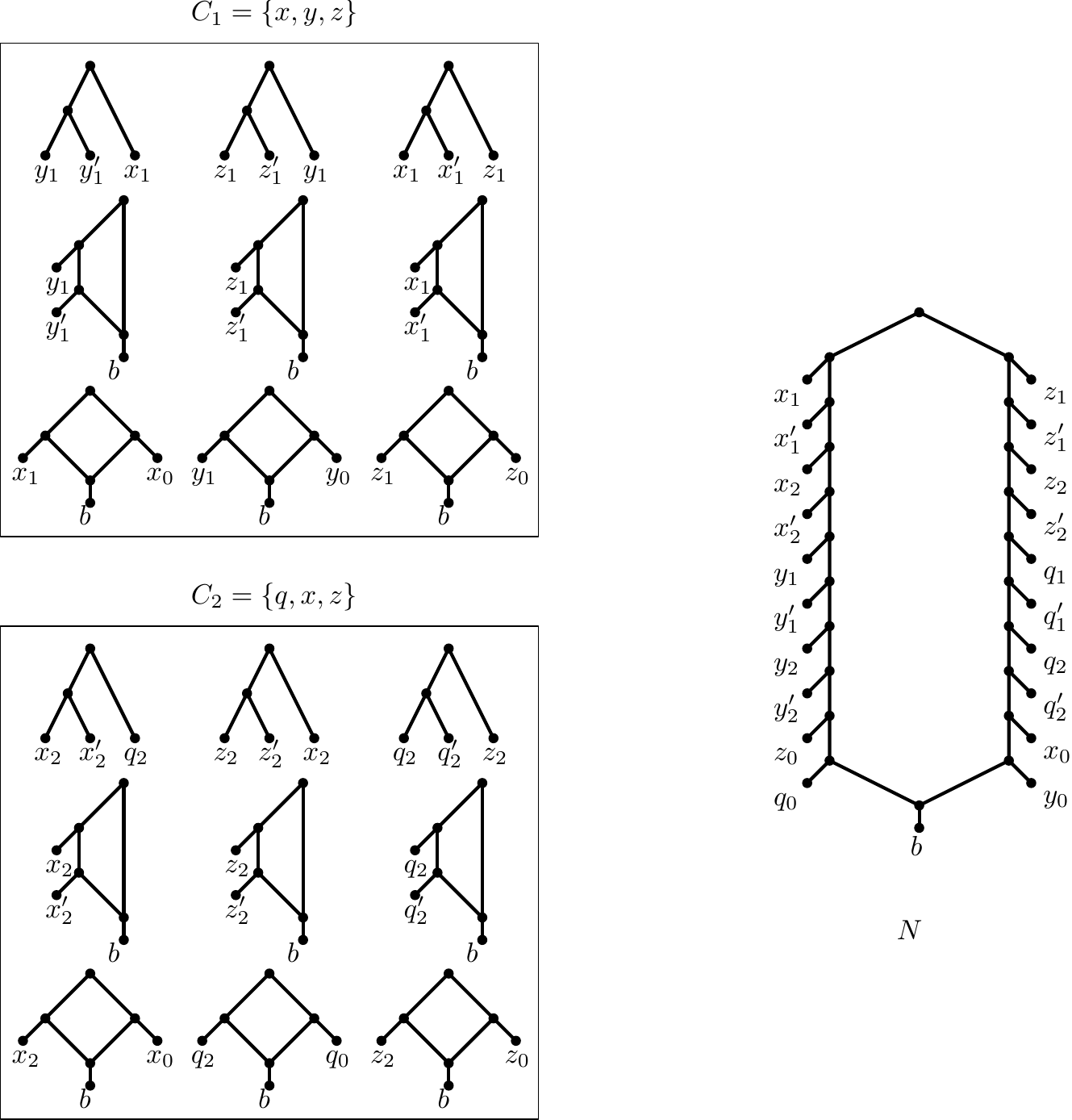}}
  \caption{An example of the reduction if the input of the \textsc{SetSplitting} problem is $\cC=\{\{x,y,z\},\{q,x,z\}\}$ (with $q<x<y<z$). A valid set splitting is~$\{\{q,z\}, \{x,y\}\}$ and the network~$N$ indicated in the figure displays all trinets produced by the reduction.}
  \label{fig:reduc}
\end{figure}

First we show that, if there exists a set splitting for~$\cC$, then there exists a network~$N$ on~$X$ that displays~$\cT$ and has exactly one reticulation. Let~$\{A,B\}$ be a set splitting for~$\cC$. We construct a network~$N$ whose root is in a cycle and each arc leaving this cycle ends in a leaf. Hence,~$N$ has precisely one reticulation, whose child is leaf. Label this leaf by taxon~$b$. Let~$\{L,R\}$ be the bipartition of~$X\setminus\{b\}$ induced by~$N$. Then, for each~$u\in A$ we put~$u_0$ in~$L$ and~$u_i$ and~$u_i'$ for~$i\geq 1$ in~$R$. \leoo{Symmetrically, for each~$u\in B$ we put~$u_0$ in~$R$ and~$u_i$ and~$u_i'$ for~$i\geq 1$ in~$L$.} It remains to describe the order of the leaves on each side of the network. For each~$i\in\{1,\ldots ,k\}$ and for any two leaves~$x_i,y_i$ that are on the same side in~$N$, put $x_i$ above $y_i$ and $y_i'$ if $x<y$ (and put $y_i$ above $x_i$ and~$x_i'$ if~$y>x$). In addition, put each~$x_i$ above~$x_i'$. The ordering can be completed arbitrarily. For an example, see the network in Figure~\ref{fig:reduc}. To see that~$N$ displays~$\cT$, first observe that all trinets of the form $S_2(x_i;x_i';b)$ are displayed by~$N$ because we put~$x_i$ above~$x_i'$ and on the same side. In addition, all trinets of the form $S_1(x_i,x_0;b)$ are also displayed by~$N$ because~$x_i$ and~$x_0$ are on opposite sides of~$N$. Now consider a constraint~$C_i=\{u,v,w\}\in\cC$ with~$u<v<w$. Since~$\{A,B\}$ is a set splitting, $|C_i\cap A|=2$ or $|C_i\cap B|=2$. Suppose that~$u$ and~$v$ are in the same set, say~$u,v\in A$. The other two cases can be dealt with in a similar manner. It then follows that~$w\in B$ and hence that $u_i,v_i,u_i',v_i'\in R$ and $w_i,w_i'\in L$. It then follows that trinets $T_1(w_i,w_i';v_i)$ and $T_1(u_i,u_i';w_i)$ are displayed by~$N$. Moreover, since $u<v$, we have that~$u_i$ is above~$v_i$ and~$v_i'$. Hence, also trinet $T_1(v_i,v_i';u_i)$ is displayed by~$N$. We conclude that~$\cT$ is displayed by~$N$.

It remains to show that if there exists a network on~$X$ that displays~$\cT$, then there exists a set splitting~$A,B$ for~$\cC$. So assume that there exists a network~$N$ on~$X$ that displays~$\cT$. From Lemma~\ref{lem:cycle} it follows that~$N$ is cycle-rooted. By Lemma~\ref{lem:sameheight}, all leaves in~$X\setminus\{b\}$ are at the same height in~$N$. Consequently, by Lemma~\ref{lem:high}, the leaves in~$X\setminus\{b\}$ are all high in~$N$ and~$b$ is low in~$N$. Let~$\{L,R\}$ be the bipartition of~$X\setminus\{b\}$ induced by~$N$. Define $A=\{u\in U \mid u_0 \in L\}$ and $B = \{u\in U \mid u_0\in R\}$. We claim that~$A$ and~$B$ form a set splitting for~$\cC$. To show this, assume the contrary, i.e., that there exists some constraint $C_i=\{u,v,w\}\in\cC$, with $u<v<w$, such that either $u,v,w\in A $ or $u,v,w\in B$. Assume without loss of generality that $u,v,w\in A$. Then $u_0,v_0,w_0\in L$. Hence, $u_i,v_i,w_i\in R$ since the trinets $S_1(u_i,u_0;b)$, $S_1(v_i,v_0;b)$ and $S_1(w_i,w_0;b)$ are contained in~$\cT$. Then, since $S_2(u_i;u_i';b)\in\cT$, we obtain that $u_i'\in R$. Moreover,~$u_i$ and~$u_i'$ are in different sidenetworks of~$N$. Then, since $T_1(u_i,u_i';w_i)\in\cT$, the sidenetwork containing~$w_i$ is strictly above the sidenetwork containing $u_i$. However, it follows in the same way from trinets $S_2(v_i;v_i';b)$ and $T_1(v_i,v_i';u_i)$ that the sidenetwork of~$N$ containing~$u_i$ is strictly above the sidenetwork containing~$v_i$. Furthermore, it follows from the trinets $S_2(w_i;w_i';b)$ and $T_1(w_i,w_i';v_i)$ that the sidenetwork containing~$v_i$ is strictly above the sidenetwork containing~$w_i$. Hence, we have obtained a contradiction. It follows that~$A$ and~$B$ form a set splitting of~$\cC$, completing the proof.
\end{proof}

\section{Concluding remarks\label{sec:concl}}

We have presented an exponential time algorithm for
determining whether or not an arbitrary set of binets and trinets
is displayed by a level-1 network, shown that this
problem is NP-hard, and
given some polynomial time algorithms for solving it
in certain special instances. It would be
interesting to know whether other special instances are
also solvable in polynomial time (for example, when either $S_1(x,y;z)$-type or $S_2(x;y;z)$-type trinets are excluded).

\leooo{We note that the problem of deciding if a set of binets and trinets is displayed by a level-1 network remains NP-hard when the input set is semi-dense, i.e. for each combination of two taxa it contains at least one binet or trinet containing those two taxa. Although the trinet set produced in the proof of Theorem~\ref{nphard} is not semi-dense, it is not difficult to make it semi-dense without affecting the reduction.}

As mentioned in the introduction, our algorithms can
be regarded as a supernetwork approach for constructing
phylogenetic networks. It is therefore worth noting
that our main algorithms extend to the case where the input consists of
a collection of binary level-1 networks,
where each input network is allowed to have any number of leaves.

Furthermore, we have shown that constructing a binary level-1 network from a set of trinets is NP-hard in general. One could instead consider the problem of constructing
such networks from networks on (at least) $m$ leaves, $3 \le m \le |X|$
(or $m$-nets for short). \leooo{However, it can be shown that it is also NP-hard}
to decide if a set of level-1 $m$-nets is displayed by a
level-1 network or not. \leooo{This can be shown by a simple reduction from the problem for trinets.}

It would also be of interest to develop algorithms to reconstruct level-$k$ networks for $k\ge 2$ from $m$-nets. Note that
the trinets displayed by a level-2 network always encode the network \cite{level2trinets}, but that in general trinets do not encode level-$k$ networks \cite{information}. Therefore, in light also of the results in this paper, we
anticipate that these problems
might be quite challenging in general. Even so, it could
still be very useful to develop heuristics for tackling
such problems as this has proven very useful in
both supertree and phylogenetic network construction.

\section*{Acknowledgements}
Leo van Iersel was partially supported by a Veni grant of the Netherlands Organization for Scientific Research (NWO). Katharina Huber and Celine Scornavacca thank the London Mathematical Society for partial support in the context of their computer science small grant scheme.
{This publication is the contribution no. 2014-XXX of the Institut des Sciences de l'Evolution de Montpellier (ISE-M, UMR 5554). This work has been partially funded by the French {\it Agence Nationale de la Recherche}, {\it Investissements d'avenir/Bioinformatique} (ANR-10-BINF-01-02, {\it Ancestrome}).

\bibliographystyle{amsplain}
\providecommand{\bysame}{\leavevmode\hbox to3em{\hrulefill}\thinspace}
\providecommand{\MR}{\relax\ifhmode\unskip\space\fi MR }
% \MRhref is called by the amsart/book/proc definition of \MR.
\providecommand{\MRhref}[2]{%
  \href{http://www.ams.org/mathscinet-getitem?mr=#1}{#2}
}
\providecommand{\href}[2]{#2}

\end{document}